\documentclass{article}

\usepackage[dblblindworkshop, final]{neurips_2026}
\workshoptitle{Dynamic Alignment on Human-AI Coupled System (BiAlign Workshop)}

\usepackage[utf8]{inputenc} 
\usepackage[T1]{fontenc}    
\usepackage{hyperref}       
\usepackage{url}            
\usepackage{booktabs}       
\usepackage{amsfonts}       
\usepackage{nicefrac}       
\usepackage{microtype}      
\usepackage{xcolor}         

\usepackage{amsmath, amssymb, amsthm, graphicx, changepage, enumitem, caption, makecell, tikz}
\usetikzlibrary{arrows.meta, positioning}

\newtheorem{theorem}{Theorem}[section]
\newtheorem{proposition}[theorem]{Proposition}

\newtheorem{definition}[theorem]{Definition}

\newtheorem{remark}[theorem]{Remark}

\title{From Observation to Intervention: Identifiability of Latent User Bias in Human–AI Interaction}

\author{%
  Lucas Biechy \\
  Inria \\
  INSA-CVL \\
  MARS \\
  \And
  Yuxiao Li \\
  Independent \\
  MARS \\
  \And
  Zhonghao He \\
  Oxford Human-Centered AI Lab \\
  Johns Hopkins University \\
  MARS \\
  \And
  Tianyi Qiu \\
  Stanford University \\
  MARS \\
}

\begin{document}

\maketitle

\begin{abstract}
Alignment methods typically treat user behavior as a reliable proxy for underlying preferences. However, this assumption breaks down when users hold belief-dependent cognitive biases, risking irreversible ``confirmation traps'' and harmful feedback loops. In this paper, we formalize when an AI system can distinguish true user preferences from behavior shaped by cognitive bias. Modeling this interaction as a multi-armed bandit, we first prove a non-identifiability theorem: under passive observation, a static binary biased user produces action trajectory distributions strictly identical to an unbiased user in a transformed environment, forcing any action-only estimator to perform no better than random guessing. We then show that active preference elicitation achieves identifiability when the bias is belief-dependent, as confirmation bias is: an observer applying a finite sequence of adversarial interventions makes the post-intervention action distributions arbitrarily separable. This identifiability result applies specifically to belief-dependent confirmation biases and not static ones. Our theoretical results suggest that robust alignment requires actively intervening to tell informed preferences from biases.
\end{abstract}

\section{Introduction}

AI systems that adapt to a user increasingly treat that user's behavior as ground truth for underlying preferences. Reinforcement learning from human feedback (RLHF) trains language models directly on human choice dynamics \citep{christiano2017deep}. Scalable oversight research aims to extend this role to settings where human supervisors cannot fully verify the target system \citep{bowman2022measuring, irving2018ai}. However, this practice rests on an assumption rarely tested: that a human agent's actions faithfully reveal its internal preferences. Building upon prior work that highlights the general fragility of this proxy \citep{casper2023open} and its vulnerability to feedback loops \citep{chaney2018how, jiang2019degenerate, krueger2020hidden}, we study the case where the user suffers from a systematic cognitive bias, updating honestly on distorted experiences. Confirmation bias is the primary instance of such distortion \citep{nickerson1998confirmation, rabin1999first, epley2016mechanics}, with well-documented behavioral and neural manifestations in human reinforcement learning and self-assessment \citep{palminteri2017confirmation, lefebvre2017behavioural}. An observer trying to infer hidden preferences from action channels alone is reading a signal corrupted by the very mental state it seeks to recover.

Prior work infers preferences mainly from learning or exploring agents
\citep{jacq2019learning, balakrishna2019onpolicy, brown2019extrapolating, ramponi2020inverse, guo2021learning, abdelsadek2026learning}, whereas demonstrator suboptimality is merely transient and fades with more observation (\ref{sec:related_work}).
A cognitive bias acts as a permanent, belief-dependent filter through which the user's posterior converges to a distorted belief. This raises a fundamental question: \emph{can an observer restricted purely to action trajectories ever distinguish a cognitively biased user from an unbiased one?}

\begin{figure}[t]
    \centering
    \includegraphics[width=1\linewidth]{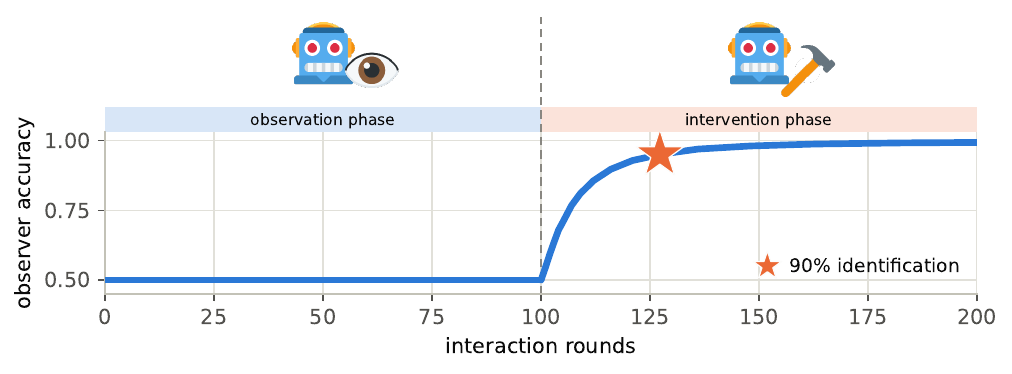}
    \caption{\textbf{Passive observational equivalence vs. Active identification.} --- Passive observation cannot distinguish a biased user in a world $\boldsymbol{\mu}$ from an unbiased user in world $\boldsymbol{\nu}$ ($P_{M_1}^{a,T} = P_{M_0}^{a,T}$). For a belief-dependent (confirmation) bias, a block of zero-reward interventions restores identifiability (Theorem~\ref{thm:identifiability}).}
    \label{fig:hero-plot}
\end{figure}
We show that passive behavioral observation fails in two severe ways. First, bias is not a transient inconvenience. Under a belief-dependent cognitive kernel that attenuates negative feedback and anchors prior expectations (Definitions~\ref{def:conf_bias}), Thompson Sampling \citep{chapelle2011empirical, agrawal2012analysis, kaufmann2012thompson, russo2018tutorial} plays a strictly suboptimal arm forever with positive probability, collapsing logarithmic regret into linear regret (Theorem~\ref{thm:lock-in}). Though non-Bayesian updates in RL agents can yield good empirical effects when having state access \citep{shen2024cmdqn}, our result provides a worst-case guarantee where confirmatory updating creates an unrecoverable trap assuming no state access. Second, this is an instance of observational equivalence in the causal-inference sense \citep{verma1990equivalence, manski2003partial, bareinboim2015bandits}. By constructing a static bias (Definition~\ref{def:static}), we show that a biased user and an unbiased user in a rescaled environment generate \emph{identical} trajectory distributions at every horizon (Theorem~\ref{thm:non-identifiability}). Thus, action-only estimators cannot beat the random choice. Unlike reward-recovering estimators in standard learning settings \citep{guo2021learning}, this observational confound is structural and permanent.

Fortunately, active intervention achieves identifiability for belief-dependent biases (Figure \ref{fig:hero-plot}). Mechanically, this mirrors adversarial reward corruption in bandits \citep{lykouris2018stochastic, gupta2019better, bogunovic2021stochastic}, but with the intent reversed: corruption is no longer an adversary's tool against the learner, but an instrument to break an observational confound \citep{eberhardt2007interventions, hyttinen2013experiment, lattimore2016causal, rainforth2024modern}. Paying a small cost to force objective failures separates user types within a finite, computable budget $m_\delta$ (Theorem~\ref{thm:identifiability}), even under softer protocols that never override spontaneous user choices (Proposition~\ref{appendix:sequantial-id}). Finally, we evaluate these theoretical limits empirically (\S\ref{sec:exp-trap}--\S\ref{sec:exp-active}) and probe how real language models behave in such interaction loops (\S\ref{sec:exp-llm}).

Our key contributions are:

\begin{itemize}

    \item \textbf{Decoupled Formalism For Cognitive Bias:} A formal setup decoupling Bayesian updates from cognitive distortion, isolating cognitive bias from reasoning failure (\S\ref{sec:setup}).

    \item \textbf{Confirmation Trap Analysis:} Proof that belief-dependent cognitive filters turn logarithmic Thompson Sampling into linear regret with positive probability (Theorem~\ref{thm:lock-in}).

    \item \textbf{Non-Identifiability:} Theoretical demonstration that biased and unbiased users across distinct environments are permanently indistinguishable from actions alone under static channel bias (Definition~\ref{def:static}, Theorem~\ref{thm:non-identifiability}).

    \item \textbf{Active Remediation Protocols:} Constructive intervention schemes using reward perturbations to restore identifiability under coercive (Theorem~\ref{thm:identifiability}) and non-coercive setups (Proposition~\ref{appendix:sequantial-id}), a characterization of when they fail for static channels (Remark~\ref{rem:static-blind}), and an explicit map of which result holds under which bias model (Table~\ref{tab:scope}), validated by simulations (\S\ref{sec:exp-trap}--\S\ref{sec:exp-active}). LLM experiments further test whether \texttt{Qwen3-4B} also exhibit the theoretical results (\S\ref{sec:exp-llm}).

\end{itemize}

\section{Related Work}
\label{sec:related_work}

\paragraph{Behavioral proxies for preference and learning from imperfect demonstrators.}
Alignment pipelines train reward models directly from human preference comparisons \citep{christiano2017deep}. Scalable oversight techniques like sandwiching \citep{bowman2022measuring} and debate \citep{irving2018ai} rely on this same assumption that user choices reflect true preferences. When raters exhibit systemic biases, literature often treats these issues as data quality flaws that can be fixed through better auditing and data collection \citep{casper2023open}. Similarly, work on learning rewards from imperfect demonstrators assumes that human suboptimality stems from temporary exploration or active learning that clears up with additional data \citep{jacq2019learning, balakrishna2019onpolicy, ramponi2020inverse, brown2019extrapolating, guo2021learning, abdelsadek2026learning}.
Closest to our question, \citep{chan2021humanirrationalitybadgood} illustrates  that a fixed, known irrationality model can help or hurdle reward inference. Instead, we ask the prior and fundamental question that whether the bias model is identifiable from behavior at all.

\paragraph{Confirmation bias and sycophancy.}
Confirmation bias is well documented in human reasoning and sequential decision-making \citep{nickerson1998confirmation, rabin1999first, garrett2020biased}. In human reinforcement learning, this bias appears as an asymmetric updating rate that favors positive feedback \citep{palminteri2017confirmation, lefebvre2017behavioural, palminteri2022computational}. Standard alignment protocols aggravate this human bias because they treat the user as an omniscient oracle rather than a learner who may hold incorrect beliefs. To maximize immediate human approval, language models adopt sycophantic behaviors by mirroring the user's initial assumptions \citep{perez2022discovering, sharma2024towards, denison2024sycophancy}. This dynamic creates a self-reinforcing loop: the sycophantic model validates the user's biased expectations, preventing disconfirming feedback and locking the user inside a confirmation trap.

\paragraph{Bandit algorithms and distorted perceptions.}
Unbiased Thompson Sampling matches classical logarithmic regret lower bounds \citep{lai1985asymptotically, kaufmann2012thompson, agrawal2012analysis, chapelle2011empirical, russo2018tutorial}. Deterministic baselines such as Upper Confidence Bound (UCB) achieve the same rate without maintaining a belief state to distort \citep{auer2002finite}. Existing analyses of Thompson Sampling under misspecified priors show only bounded regret degradation \citep{liu2016prior, simchowitz2021bayesian}. Similarly, behavioral bandits with statically distorted value perception still achieve sublinear regret \citep{gopalan2017weighted}. In contrast, our work proves an unbounded trap with positive probability caused by belief-dependent, self-reinforcing distortions. This distinction is particularly relevant because human decision-makers are often modeled as Bayesian bandit learners \citep{steyvers2009bayesian}.

\paragraph{Non-identifiability, active intervention, and behavioral feedback loops.}We cast passive non-identifiability causally: a biased user in one environment and an unbiased user in a rescaled one are observationally equivalent \citep{verma1990equivalence}. This partial identification \citep{manski2003partial} stems from a confound inside the agent's perception rather than in the observer's causal estimates \citep{bareinboim2015bandits}. Similar constraints exist in reward learning, where demonstrations and comparisons recover rewards only up to equivalence classes \citep{skalse2023invariance, cao2021identifiability}.Active interventions resolve such confounds \citep{eberhardt2007interventions}, as studied in experiment selection \citep{hyttinen2013experiment}, causal bandits \citep{lattimore2016causal}, and Bayesian experimental design \citep{rainforth2024modern}. Our closed-form horizon $m_\delta$ (Proposition~\ref{prop:computable_horizon}) gives a fixed-confidence identification bound \citep{garivier2016optimal} for this hypothesis. Mechanically, our approach repurposes reward-poisoning techniques \citep{lykouris2018stochastic, gupta2019better, bogunovic2021stochastic, jun2018adversarial} as diagnostic tools for active preference elicitation \citep{sadigh2017active, ji2024rlhf, hadfieldmenell2016cooperative}. In contrast, relying solely on passive behavior leads to degenerate recommendation loops \citep{chaney2018how, jiang2019degenerate, krueger2020hidden, milli2025engagement}.
\section{Problem Setup}\label{sec:setup}

\paragraph{MAB Environment \& Perceptual Distortion Framework.}
We consider a Multi-Armed Bandit (MAB) problem with $K$ arms, indexed by $k \in \{1, \dots, K\}$. Each arm is associated with an unknown mean return $\mu_k \in \Theta$, defining the environment state $\boldsymbol{\mu} = (\mu_1, \dots, \mu_K) \in \Theta^K$. At each time step $t \in \{1, \dots, T\}$, the agent selects action $a_t \in \{1, \dots, K\}$ and receives an objective ground-truth reward $r_t \sim P(r \mid \mu_{a_t})$.
We parameterize the decision-maker via a latent user type $z \in \{0, 1\}$. While a reference objective agent ($z=0$) directly processes ground-truth signals $r_t$, a perceptually biased user $\mathcal{A}_{\text{bias}}$ ($z=1$) observes a subjectively corrupted reward $\tilde{r}_t$. This subjective signal is generated by a perceptual distortion kernel that may depend on prior expectations:
\begin{equation*}
    \tilde{r}_t \sim P_{\text{bias}}\left(\tilde{r} \;\middle|\; r_t, \tilde{\mu}_{a_t}^{(t-1)}, z\right)
\end{equation*}
where $\tilde{\mu}_{a_t}^{(t-1)} := \mathbb{E}_{\tilde{b}_{t-1}}[\mu_{a_t}]$ is the prior expected return for arm $a_t$. For $z=0$, perception is identity ($\tilde{r}_t = r_t$). The agent maintains history $\tilde{h}_t := \big((a_i, \tilde{r}_i)\big)_{i=1}^{t-1}$ and updates its posterior via Bayes' rule: $\tilde{b}_t(\boldsymbol{\mu}) \propto \tilde{b}_0(\boldsymbol{\mu}) \prod_{i=1}^{t-1} P(\tilde{r}_i \mid a_i, \mu_{a_i})$, treating subjective signals as objective.

\paragraph{Behaviorally Grounded Bias Instantiations.}
While $P_{\text{bias}}$ accommodates arbitrary distortion distributions, establishing theoretical bounds requires specifying how beliefs distort incoming signals. We therefore ground our analysis in two canonical mechanisms. They answer different questions: the belief-dependent kernel (Def.~\ref{def:conf_bias}) captures the dynamics of confirmation bias, while the memoryless channel (Def.~\ref{def:static}) is the simplest bias that can be exactly confounded with the environment. Table~\ref{tab:scope} states which result holds under which model.

First, to capture classical confirmation bias where individuals selectively discount evidence contradicting prior expectations \citep{nickerson1998confirmation, rabin1999first, palminteri2017confirmation}, we model belief-dependent distortion via prediction-error attenuation and its non-linear Bernoulli manifestation near certainty \citep{lefebvre2017behavioural}.

\begin{definition}[Belief-Dependent Confirmation Bias]\label{def:conf_bias}
An agent $\mathcal{A}_{\text{bias}}$ exhibits \emph{confirmation bias} if subjective perception dampens disconfirming evidence:
\begin{enumerate}[leftmargin=*, topsep=0pt, partopsep=0pt, parsep=0pt, itemsep=0pt]
    \item \textbf{$\gamma$-Attenuation:} Whenever prior expectation exceeds objective outcome ($\tilde{\mu}_{a_t}^{(t-1)} > r_{t}$), expected perception satisfies:
    \begin{equation}\label{eq:attenuation}
        \mathbb{E}_{\tilde{r} \sim P_{\text{bias}}}\left[ \tilde{r} \;\middle|\; r_t, \tilde{\mu}_{a_t}^{(t-1)} \right] \ge r_t + \gamma \left( \tilde{\mu}_{a_t}^{(t-1)} - r_t \right) \quad \text{for } \gamma \in (0, 1]
    \end{equation}
    \item \textbf{Persistent Confirmation near Certainty:} In Bernoulli settings ($r_t \in \{0, 1\}$), positive outcomes are kept and negative ones suppressed superlinearly with prior confidence, for $C > 0$, and $p > 1$:
    \begin{equation}\label{eq:confirmation}
        P_{\mathrm{bias}}(\tilde r = 1 \mid r = 1, \cdot) = 1,
        \qquad
        P_{\mathrm{bias}}(\tilde r = 0 \mid r = 0, \tilde{\mu}_{a_t}^{(t-1)}) \le C \left(1 - \tilde{\mu}_{a_t}^{(t-1)}\right)^p
    \end{equation}
\end{enumerate}
\end{definition}

Second, to complement state-dependent cognitive distortions with a memoryless baseline, we also consider static channel bias.

\begin{definition}[Static Binary Bias]\label{def:static}
An agent exhibits \emph{static channel bias} with parameters $(u, v) \in [0, 1]^2 \setminus \{(1, 0)\}$ if outcome perception is memoryless and independent of prior beliefs:
\begin{equation}
    P_{\mathrm{bias}}(\tilde r = 1 \mid r = 1) = u, \qquad P_{\mathrm{bias}}(\tilde r = 1 \mid r = 0) = v
\end{equation}
\end{definition}

\paragraph{User Policy \& Observer Agent ($\mathcal{A}_{\text{obs}}$).}
We model the user as a Bayesian decision-maker driving exploration via Thompson Sampling \citep{agrawal2012analysis, russo2018tutorial}. At step $t$, given belief $\tilde{b}_{t-1}(\cdot \mid z)$, the user draws $\boldsymbol{\theta}^{(t)} \sim \tilde{b}_{t-1}(\cdot \mid z)$ and selects $a_t = \arg\max_{k} \theta_k^{(t)}$, with a fresh draw at each round and a fixed tie-breaking rule.
We model the LLM as as an agent $\mathcal{A}_{\text{obs}}$ restricted to observing the user's action history $h_t^a := (a_1, \dots, a_{t-1})$ and its own intervention sequence. This restriction is dual-purposed: ground-truth rewards are unobservable in deployment, while feeding subjective user feedback directly to the model risks inducing sycophancy, where $\mathcal{A}_{\text{obs}}$ blindly reinforces the bias it seeks to detect. The marginal action likelihood evaluated by $\mathcal{A}_{\text{obs}}$ under type $z$ is:
\begin{equation*}
    P(a_t = k \mid h_t^a, z) = \mathbb{E}\Big[\, \pi_{\mathrm{TS}}\big(k \mid \tilde{b}_{t-1}\big) \;\Big|\; h_t^a, z \Big],\qquad \pi_{\mathrm{TS}}(k \mid b) := P_b(\text{TS selects } k)
\end{equation*}

\section{Theoretical Analysis: Passive Limitations}
This section establishes the theoretical boundaries of inferring user preferences and bias under passive observation, demonstrating why standard monitoring fails without active intervention. First, we show that belief-dependent confirmation bias (Def.~\ref{def:conf_bias}) forces Thompson Sampling into an ``confirmation trap'' (Theorem~\ref{thm:lock-in}), degrading logarithmic regret to linear. Second, we prove that static channel biases (Def.~\ref{def:static}) induce an observational equivalence (Theorem~\ref{thm:non-identifiability}) when $\mathcal{A}_{\text{obs}}$ observes action histories $h_t^a$ alone, making passive identification impossible. These two results concern different bias models; Table~\ref{tab:scope} gives the scope of formal results in the paper. Whether passive observation can identify a belief-dependent kernel remains open.

\begin{table}[t]
\centering
\small
\caption{\textbf{Scope of the formal results.} Each bias model answers a different question: the static channel isolates the confound between bias and environment, while the belief-dependent kernel carries the dynamics that interventions exploit.}
\label{tab:scope}
\begin{tabular}{>{\raggedright\arraybackslash}p{0.19\textwidth} >{\raggedright\arraybackslash}p{0.36\textwidth} >{\raggedright\arraybackslash}p{0.36\textwidth}}
\toprule
 & \textbf{Static channel} (Def.~\ref{def:static}) & \textbf{Confirmation bias} (Def.~\ref{def:conf_bias}) \\
\midrule
\textbf{Passive observation} & Non-identifiable at every horizon (Thm.~\ref{thm:non-identifiability}) & Linear regret (Thm.~\ref{thm:lock-in}); identifiability open \\
\textbf{Intervention} & Uninformative when $v=0$ (Remark~\ref{rem:static-blind}) & Identifiable within $m_\delta$ rounds (Thm.~\ref{thm:identifiability}, Prop.~\ref{prop:computable_horizon}, App.~\ref{appendix:sequantial-id}) \\
\bottomrule
\end{tabular}
\end{table}

\paragraph{Confirmation Trap in Biased Agents.}
When cognitive distortion dynamically depends on prior beliefs, negative outcomes are re-interpreted as positive. The following result establishes that this feedback loop causes permanent lock-in on suboptimal actions with positive probability.

\begin{theorem}[Positive-Probability Confirmation Trap]\label{thm:lock-in}
Consider a $K$-armed Bernoulli bandit with true means $\boldsymbol{\mu}\in[0,1]^K$. Suppose a biased agent $\mathcal{A}_{\text{bias}}$ employs Thompson Sampling with independent Beta priors $\tilde b_0(\boldsymbol{\mu})=\prod_{j=1}^K\operatorname{Beta}(\alpha_j,\beta_j)$ with $\alpha_j, \beta_j > 0$. Fix a strictly suboptimal arm $k$ ($\Delta_k := \mu^\star - \mu_k > 0$). Under Negative Prediction-Error Attenuation and Persistent Confirmation (\emph{Definition~\ref{def:conf_bias}}), if the competing arms satisfy:
\begin{equation}
    \beta_{\min}^{(-k)} := \min_{j\neq k}\beta_j > 1
\end{equation}
then there exists a sample paths event $\mathcal{E}_k$ with strictly positive probability ($\mathbb P(\mathcal E_k)>0$) defined by:
\begin{equation}
    \mathcal E_k := \bigcap_{t=1}^{\infty}\{a_t=k, \; \tilde r_t=1\}
\end{equation}
On $\mathcal E_k$, the subjective expectation and selection probability converge to certainty: $\tilde\mu_k^{(t)} \to 1$ and $P(a_{t+1}=k \mid \tilde h_{t+1}) \to 1$. Consequently, the expected pseudo-regret grows linearly: $\mathbb E[R_T] \ge \rho_k \Delta_k T$ for some constant $\rho_k > 0$ independent of $T$.
\end{theorem}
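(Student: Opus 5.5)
The plan is to show that $\mathcal E_k$ factors into a product of conditional one-step probabilities $\prod_{n\ge 0}(1-\varepsilon_n)$, and that $\sum_n \varepsilon_n<\infty$. The first key observation is that on the prefix event $\mathcal E_k^{(n)} := \bigcap_{t\le n}\{a_t=k,\tilde r_t=1\}$ the user's posterior is deterministic. Arm $k$ has posterior $\operatorname{Beta}(\alpha_k+n,\beta_k)$, so
\[
\tilde\mu_k^{(n)}=\frac{\alpha_k+n}{\alpha_k+\beta_k+n}\to 1 .
\]
Every other arm $j\neq k$ has never been pulled and keeps its prior $\operatorname{Beta}(\alpha_j,\beta_j)$. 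Hence
\[
\mathbb P(\mathcal E_k^{(n+1)}\mid \mathcal E_k^{(n)}) = \pi_n\cdot q_n ,
\]
where $\pi_n$ is the Thompson-sampling probability of selecting $k$ under this fixed posterior. The factor $q_n=\mathbb P(\tilde r=1\mid a=k,\tilde\mu_k^{(n)})$ depends only on $\mu_k$ and the kernel. By continuity of measure, $\mathbb P(\mathcal E_k)=\prod_n \pi_n q_n$. This product is positive as soon as each factor is positive and $\sum_n (1-\pi_n)+\sum_n(1-q_n)<\infty$.

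\textbf{Step 1 (perception factor).} By Definition~\ref{def:conf_bias}, $\mathbb P(\tilde r=1\mid r=1)=1$. Therefore
\[
1-q_n=(1-\mu_k)\,\mathbb P(\tilde r=0\mid r=0,\tilde\mu_k^{(n)}) \le C(1-\mu_k)\Big(\tfrac{\beta_k}{\alpha_k+\beta_k+n}\Big)^p = O(n^{-p}),
\]
which is summable because $p>1$. Positivity of each $q_n$ holds automatically when $\mu_k>0$. If $\mu_k=0$, the bound is $<1$ for all $n\ge n_0$, and for the finitely many early rounds I would need $P_{\rm bias}(\tilde r=0\mid r=0,\cdot)<1$ to be assumed or absorbed. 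I will flag this as a mild nondegeneracy condition.

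\textbf{Step 2 (selection factor; the main obstacle).} A union bound gives
\[
1-\pi_n\le \sum_{j\ne k}\mathbb P(\theta_j\ge \theta_k),\qquad \theta_k\sim\operatorname{Beta}(\alpha_k+n,\beta_k)\ \text{independent of}\ \theta_j .
\]
Tie-breaking is irrelevant since ties have probability zero. The key estimate is a uniform Beta tail bound
\[
\mathbb P(\theta_j\ge 1-s)\le c_j s^{\beta_j}\quad \text{for all } s\in[0,1].
\]
It follows from the density being $O((1-x)^{\beta_j-1})$ on $[1/2,1]$, and from trivially bounding the tail by $1\le 2^{\beta_j}s^{\beta_j}$ when $s\ge 1/2$. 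Conditioning on $\theta_k$ then gives
\[
\mathbb P(\theta_j\ge\theta_k)\le c_j\,\mathbb E[(1-\theta_k)^{\beta_j}] = c_j\frac{B(\alpha_k+n,\beta_k+\beta_j)}{B(\alpha_k+n,\beta_k)} = \Theta\big(n^{-\beta_j}\big),
\]
using Gamma-function asymptotics. This is summable exactly because $\beta_j\ge\beta^{(-k)}_{\min}>1$, which explains the hypothesis. Each $\pi_n>0$ because all Beta laws have full support on $(0,1)$.

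\textbf{Step 3 (conclusions).} With both series summable and all factors positive, $\mathbb P(\mathcal E_k)>0$. On $\mathcal E_k$ we have $\tilde\mu_k^{(t)}\to1$ by the explicit formula, and $P(a_{t+1}=k\mid\tilde h_{t+1})=\pi_t\to 1$ by Step 2. Finally, the pseudo-regret satisfies $R_T\ge \Delta_k T$ on $\mathcal E_k$ since arm $k$ is pulled every round. Therefore
\[
\mathbb E[R_T]\ge \mathbb P(\mathcal E_k)\,\Delta_k T ,
\]
so we may take $\rho_k:=\mathbb P(\mathcal E_k)$, which is independent of $T$. I expect the only delicate points to be the uniform tail constant in Step 2 (handling $\alpha_j<1$, where the density blows up near $0$ but not near $1$) and the finite-prefix positivity caveat in Step 1.
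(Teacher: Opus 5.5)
Your argument is correct and is essentially the paper's proof. It uses the same prefix-event factorization of $\mathbb P(\mathcal E_k)$ into selection and perception factors, the same summable $O(n^{-p})$ bound on perceived failures, and the same uniform Beta upper-tail bound combined with the Beta moment $\mathbb E[(1-\theta_k)^{\beta_j}]=O(n^{-\beta_j})$, which makes the escape probabilities summable under $\beta^{(-k)}_{\min}>1$.

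The nondegeneracy caveat you flag in Step 1 is not an extra assumption you need to add. The theorem also assumes $\gamma$-attenuation, and with $r=0<\tilde\mu_k^{(n)}$ this gives $P_{\mathrm{bias}}(\tilde r=1\mid r=0,\tilde\mu_k^{(n)})=\mathbb E[\tilde r\mid r=0,\tilde\mu_k^{(n)}]\ge\gamma\tilde\mu_k^{(n)}>0$. So every perception factor is strictly positive even when $\mu_k=0$, and this is exactly how the paper uses that hypothesis.
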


Proof in Appendix~\ref{proof:lock-in}.

Without distortion, Thompson Sampling with the same priors achieves classical logarithmic regret \citep{agrawal2012analysis, kaufmann2012thompson}, but Theorem~\ref{thm:lock-in} shows this guarantee fails under belief-dependent cognitive bias.

\paragraph{Action-Only Non-Identifiability.}
Having established that cognitive bias incurs severe regret, a fundamental question arises: \emph{Can an external observer detect whether a user is biased purely by monitoring their interactions?} Because $\mathcal{A}_{\text{obs}}$ observes neither ground-truth rewards $r_t$ nor perceived rewards $\tilde{r}_t$, cognitive distortion can be perfectly confounded with environment statistics \citep{verma1990equivalence, bareinboim2015bandits}.

\begin{definition}[Action Identifiability]\label{def:identifiability}
Let $\mathcal M_0$ and $\mathcal M_1$ denote the model classes of unbiased and biased user configurations, respectively. User type $z \in \{0, 1\}$ is \emph{action-identifiable} if their induced action distribution spaces are disjoint:
\begin{equation}
    \left\{P_M^{a,T} : M \in \mathcal M_0\right\} \cap \left\{P_M^{a,T} : M \in \mathcal M_1\right\} = \varnothing \quad \text{for some } T \ge 1
\end{equation}
where $P_M^{a,T} := \mathcal L_M(a_1,\ldots,a_T)$ denotes the probability law over action trajectories induced by model $M$. Otherwise, the user type is \emph{action-non-identifiable}.
\end{definition}

For every biased user, \textit{there exists an honest world that explains their behavior exactly}. Watching actions alone can never tell the two apart (Figure~\ref{fig:hero-plot}).

\begin{theorem}[Fundamental Passive Non-Identifiability]\label{thm:non-identifiability}
Consider a $K$-armed Bernoulli bandit with unknown means $\boldsymbol{\mu} \in [0,1]^K$. Let $M_1 \in \mathcal{M}_1$ be a biased agent subject to Static Binary Bias (\emph{Definition~\ref{def:static}}) with effective arm means $\nu_k := u\mu_k + v(1-\mu_k)$. Let $M_0 \in \mathcal{M}_0$ be an unbiased agent operating in an environment with true means $\boldsymbol{\nu} = (\nu_1, \dots, \nu_K)$.

If $M_0$ and $M_1$ share prior $b_0 = \tilde{b}_0$ and the same Thompson Sampling rule (fresh posterior draw each round, independent of rewards and perception noise; common tie-breaking rule), they induce identical action trajectory distributions for all $T \ge 1$:
\begin{equation}
    P_{M_1}^{a,T} = P_{M_0}^{a,T}
\end{equation}
Consequently, no action-only estimator $\widehat z_T(a_1, \dots, a_T)$ can beat random guessing:
\[
\max \{ P_{M_0}(\widehat z_T = 1), P_{M_1}(\widehat z_T = 0) \} \ge \frac{1}{2}
\] 
Furthermore, for any prior $\pi_1 = P(M_1) \in (0, 1)$, passive observation yields zero information gain, keeping posterior odds strictly static: $P(M_1 \mid h_T^a) = \pi_1$ for all $T \ge 1$.
\end{theorem}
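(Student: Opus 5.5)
The plan is a coupling argument built on the fact that a static channel composed with a Bernoulli reward is again Bernoulli. First I will show that for the biased agent $M_1$, conditional on the action $a_t=k$ and on the entire past (actions, perceived rewards, posterior draws), the perceived reward $\tilde r_t$ is Bernoulli with parameter
\[
P(\tilde r_t = 1 \mid a_t = k,\ \text{past}) = u\mu_k + v(1-\mu_k) = \nu_k .
\]
This uses that $r_t \sim \mathrm{Bern}(\mu_k)$ is drawn fresh given $a_t$, and that the static kernel of Definition~\ref{def:static} depends only on $r_t$, not on beliefs or history. This is precisely the law of the objective reward seen by the unbiased agent $M_0$ in environment $\boldsymbol{\nu}$ after pulling arm $k$. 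The constraint $(u,v)\neq(1,0)$ does not matter here; it only excludes the trivial unbiased case. Since $\nu_k\in[0,1]$, the environment $\boldsymbol\nu$ is a valid Bernoulli bandit.

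Next I will prove by induction on $t$ that the joint laws of the internal histories $\tilde h_{t+1} = ((a_i,\tilde r_i))_{i\le t}$ under $M_1$ and $h_{t+1}=((a_i,r_i))_{i\le t}$ under $M_0$ coincide. The base case is trivial. For the inductive step, both agents apply the same Bayes update from the common prior $b_0=\tilde b_0$, so the posteriors are the same deterministic function of the history. Both then draw $\boldsymbol\theta^{(t+1)}$ freshly from that posterior, independently of the reward and perception noise, and apply the same argmax and tie-breaking rule. Hence the conditional law of $a_{t+1}$ given the history is identical in the two models. By the first step, so is the conditional law of the next (perceived) reward given the history and $a_{t+1}$. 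Chaining these conditional kernels gives equality of the joint laws of the internal histories, and marginalizing onto actions yields $P_{M_1}^{a,T}=P_{M_0}^{a,T}$ for every $T$. The step needing most care is stating the conditional independence precisely. The underlying true reward $r_t$ of $M_1$ is not part of its history; it must be integrated out at each step, which is legitimate because the agent's subsequent behavior depends on $r_t$ only through $\tilde r_t$. I will write the joint density of $(a_{1:T},\tilde r_{1:T})$ explicitly as a product of policy terms and $\nu_{a_t}^{\tilde r_t}(1-\nu_{a_t})^{1-\tilde r_t}$ factors to make this clean.

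The consequences follow directly. For any estimator $\widehat z_T$ that is a function of $a_{1:T}$, set $A=\{\widehat z_T=1\}$. Then
\[
P_{M_0}(A)+P_{M_1}(A^c)=P_{M_0}(A)+1-P_{M_0}(A)=1,
\]
so the larger of the two error probabilities is at least $1/2$. For the posterior statement, Bayes' rule gives
\[
P(M_1\mid h^a_T)=\frac{\pi_1 P_{M_1}^{a,T}(h^a_T)}{\pi_1 P_{M_1}^{a,T}(h^a_T)+(1-\pi_1)P_{M_0}^{a,T}(h^a_T)}=\pi_1
\]
on every history of positive probability. Hence the likelihood ratio is identically $1$ and the mutual information between $z$ and the action trajectory is zero.
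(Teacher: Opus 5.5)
Your proposal is correct and follows essentially the same route as the paper's proof. Like the paper, you show that the static channel composed with a fresh $\operatorname{Bernoulli}(\mu_k)$ reward yields a $\operatorname{Bernoulli}(\nu_k)$ recorded observation given the history and action. You then induct on the joint law of the recorded histories under a common Thompson Sampling kernel, marginalize onto actions, and obtain the error bound from $P_{M_0}(A)+P_{M_1}(A^c)=1$ and the frozen posterior from Bayes' rule on positive-probability histories. Your explicit product-form likelihood is just the written-out version of the paper's factorization $P_{M_z}(G_{t+1}^z=g_t\oplus(k,y))=P_{M_z}(G_t^z=g_t)\,\pi_{\mathrm{TS}}(k\mid g_t)\,\nu_k^{y}(1-\nu_k)^{1-y}$.
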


Proof in Appendix~\ref{proof:non-identifiability}.

\section{Active Interventions and Identification}\label{sec:active-id}

Having established that passive observation cannot distinguish cognitive bias from different environmental conditions (Theorem~\ref{thm:non-identifiability}), we now show how an active Observer can identify a belief-dependent bias (Def.~\ref{def:conf_bias}), and why the same test can fail for a static channel. By forcing deterministic objective failures, the Observer induces divergent behaviors. An unbiased agent learns the failure rate and abandons the arm, whereas a biased agent filters disconfirming signals and continues to play it.

\paragraph{Adversarial Zero-Reward Interventions.}
Before a selected action yields a natural outcome, the Observer can choose to actively alter the reward pipeline.

\begin{definition}[Adversarial Zero-Reward Interventions \& Block Protocol]\label{def:interventions}
At round $t$, upon user choice $a_t$, the observer selects an intervention $I_t \in \{0, 1\}$ yielding delivered objective reward
\[
r_t = (1 - I_t) r_t^\dagger
\]
where $r_t^\dagger \sim \operatorname{Bernoulli}(\mu_{a_t})$. Forcing an objective failure ($I_t = 1$) incurs cost $c > 0$, with cumulative cost $C_T = c \sum_{t=1}^T I_t$. The observer tracks action history $h_t^a$ and its own past interventions $\{I_i\}_{i=1}^{t-1}$, but sees neither delivered rewards $r_t$ nor subjective $\tilde{r}_t$.

To systematically force belief revision, a \emph{controlled zero-reward block} of length $m$ on arm $k$ starting at time $\tau$ enforces $a_\tau = \dots = a_{\tau+m-1} = k$ with active interventions $I_\tau = \dots = I_{\tau+m-1} = 1$.
\end{definition}

\paragraph{Active Bias Identification.}
By delivering $m$ consecutive zero-rewards, the Observer creates a regime where the unbiased model $M_0$ updates its posterior toward zero, while the biased model $M_1$ preserves its prior expectation under Persistent Confirmation (\emph{Eq. ~\ref{eq:confirmation}}).

\begin{theorem}[Active Identification via Zero-Reward Block]\label{thm:identifiability}
Fix arm $k$ and suppose candidate models $M_0, M_1$ share independent Beta posteriors at step $\tau$, with $\mu_k \mid \mathcal{F}_\tau \sim \operatorname{Beta}(A, B)$. Define the tail probability bound $\varepsilon_\infty(A,B) := C B^p \sum_{\ell=0}^{\infty} (A+B+\ell)^{-p}$. Under Persistent Confirmation (\emph{Eq.~\ref{eq:confirmation}}), if the initial state satisfies $\varepsilon_\infty(A,B) \le \frac{\delta}{6}$ for target error $\delta \in (0,1)$, then there exists a finite block length $m_\delta$ such that for all $m \ge m_\delta$:
\begin{equation}
    \left\| Q_1^{(m)} - Q_0^{(m)} \right\|_{\mathrm{TV}} > 1 - \delta
\end{equation}
where $Q_z^{(m)}$ denotes the post-intervention Thompson Sampling action $a_{\tau+m}$ distribution under type $z \in \{0, 1\}$, marginalized over latent perceived rewards $\tilde{r}_t$ under $M_1$.

Consequently, the single-action decision rule $\widehat{z} = \mathbf{1}\{a_{\tau+m} = k\}$ identifies user bias with error at most $\delta$, at total intervention cost $C = c \, m_\delta$.
\end{theorem}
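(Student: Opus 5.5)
The plan is to reduce the total-variation bound to the single test event $E_k := \{a_{\tau+m} = k\}$. Since $\|Q_1^{(m)} - Q_0^{(m)}\|_{\mathrm{TV}} \ge Q_1^{(m)}(E_k) - Q_0^{(m)}(E_k)$, it is enough to show two things: $Q_1^{(m)}(E_k) \ge 1 - \delta/2$ and $Q_0^{(m)}(E_k) < \delta/2$. This also bounds both error probabilities of $\widehat z = \mathbf 1\{a_{\tau+m}=k\}$ by $\delta/2$, which gives the decision-rule claim. The cost statement $C = c\,m_\delta$ is immediate from Definition~\ref{def:interventions}. Throughout I use the Beta--Bernoulli conjugate update. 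During the block every delivered reward is $r_t = 0$, so each perceived $\tilde r_t$ adds $1$ to either the $A$ or the $B$ count of arm $k$. The other arms are not played during the block, so their posteriors stay frozen at their step-$\tau$ values, and these are common to $M_0$ and $M_1$.

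\textbf{Unbiased model $M_0$.} After $m$ zeros the posterior of arm $k$ is $\operatorname{Beta}(A, B+m)$, which concentrates at $0$ as $m\to\infty$. Thompson Sampling picks $k$ only if its sample exceeds every competitor's sample. Competing arms $j\neq k$ have fixed Beta posteriors, so $\theta_j > \eta$ with probability at least $1-\delta/4$ for some $\eta>0$ that does not depend on $m$. By Markov, $P(\theta_k > \eta) \le A/((A+B+m)\eta)$. Hence $Q_0^{(m)}(E_k) \le \delta/4 + A/((A+B+m)\eta)$, and this is below $\delta/2$ once $m \ge m_\delta$ for an explicit $m_\delta$. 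This is where $m_\delta$ is chosen.

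\textbf{Biased model $M_1$.} Let $G$ be the event that every perceived reward in the block equals $1$. On $G$ the posterior after $\ell$ steps is $\operatorname{Beta}(A+\ell, B)$, so $1-\tilde\mu_k^{(\tau+\ell)} = B/(A+B+\ell)$. By Eq.~\eqref{eq:confirmation} the conditional probability of a suppressed failure turning into a perceived $0$ at step $\ell$ is at most $C B^p (A+B+\ell)^{-p}$. A union bound over the block, using the chain rule along the path where $G$ has held so far, gives $P(G^c) \le \varepsilon_\infty(A,B) \le \delta/6$. This is the main technical step. One has to be careful that the bound is applied conditionally on the prefix of $G$, since off $G$ the prior confidence drops and the estimate is no longer needed. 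The sum converges because $p>1$.

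On $G$ the arm-$k$ posterior is $\operatorname{Beta}(A+m, B)$, which concentrates at $1$. The competitors are frozen at means strictly below $1$, so with the tie-breaking rule $P(\text{TS picks } k \mid G) \ge 1 - \delta/3$ for $m$ large; I enlarge $m_\delta$ if needed. Concretely, $P(\theta_k < 1-\eta') \le B/((A+B+m)\eta')$ by Markov, and each competitor exceeds $1-\eta'$ with small probability for a suitable $\eta'$. Therefore $Q_1^{(m)}(E_k) \ge (1-\delta/6)(1-\delta/3) \ge 1-\delta/2$.

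Combining, $\|Q_1^{(m)}-Q_0^{(m)}\|_{\mathrm{TV}} > 1-\delta$. Strictness follows from the strict inequality in the $M_0$ bound.
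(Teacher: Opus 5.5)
Your proposal is correct and follows the paper's proof essentially step for step. It uses the same reduction of total variation to the single event $\{a_{\tau+m}=k\}$, the same chain-rule union bound along the confirming path giving $P_{M_1}(\text{some perceived failure})\le\varepsilon_\infty(A,B)$, and the same competitor-threshold argument applied to $\operatorname{Beta}(A,B+m)$ under $M_0$ and $\operatorname{Beta}(A+m,B)$ under $M_1$. The only difference is how you bound the diagnostic arm's Thompson sample: you use Markov's inequality on the posterior mean, which gives monotone $O(1/m)$ bounds and a valid but larger $m_\delta$. The paper instead integrates the Beta density directly, obtaining the exponentially decaying bounds $L_0(m;x)$ and $L_1(m;y)$ that yield the closed-form horizon of Proposition~\ref{prop:computable_horizon}.
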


Proof in Appendix~\ref{proof:identifiability}.

The active identification test has a minimal length $m_\delta$, explicitly computable in closed form directly from the observer's prior.

\begin{proposition}[Computable Intervention Horizon]\label{prop:computable_horizon}
Fix target error rate $\delta \in (0,1)$ and set threshold $\xi := \delta / 6$. Let $G_j$ denote the CDF of the Thompson sample from competitor arm $j \neq k$. 
Choose thresholds $x, y \in (0,1)$ such that $G_{j_0}(x) \le \xi$ for some $j_0 \neq k$ and $\sum_{j \neq k} [1 - G_j(1-y)] \le \xi$. Define the tail bound functions:
\begin{align}
    L_0(m; x) &:= D_A(x) \frac{\Gamma(A+B+m)}{\Gamma(A)\Gamma(B+m+1)} (1-x)^{B+m} \\
    L_1(m; y) &:= D_B(y) \frac{\Gamma(A+B+m)}{\Gamma(B)\Gamma(A+m+1)} (1-y)^{A+m}
\end{align}
where $D_\alpha(s) := \max\{1, s^{\alpha-1}\}$. Then, a sufficient block length for Theorem~\ref{thm:identifiability} is given in closed form by:
\begin{equation}
    m_\delta := \min \left\{ m \ge 1 : L_0(m; x) \le \xi \quad \text{and} \quad L_1(m; y) \le \xi \right\}
\end{equation}
\end{proposition}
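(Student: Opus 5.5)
The plan is to bound the two type-conditional error probabilities of the rule $\widehat z=\mathbf 1\{a_{\tau+m}=k\}$ separately, each by a sum of at most three terms of size $\xi=\delta/6$. Then $P_{M_0}(a_{\tau+m}=k)+P_{M_1}(a_{\tau+m}\neq k)\le 6\xi=\delta$, and since $\|Q_1^{(m)}-Q_0^{(m)}\|_{\mathrm{TV}}\ge Q_1^{(m)}(\{k\})-Q_0^{(m)}(\{k\})$, this gives the bound required in Theorem~\ref{thm:identifiability}. After the block of length $m$:
\begin{itemize}
\item under $M_0$, arm $k$ has posterior $\operatorname{Beta}(A,B+m)$, since all $m$ delivered rewards are $0$ and are perceived faithfully;
\item under $M_1$, on the event that every zero is perceived as a one, arm $k$ has posterior $\operatorname{Beta}(A+m,B)$.
\end{itemize}
In both cases the competitor arms $j\neq k$ are untouched, so their Thompson samples keep CDFs $G_j$.

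\textbf{Unbiased type.} Let $\theta_k\sim\operatorname{Beta}(A,B+m)$. The event $a_{\tau+m}=k$ requires $\theta_k\ge\theta_{j_0}$. Split on whether $\theta_{j_0}\le x$:
\[
P_{M_0}(a_{\tau+m}=k)\le G_{j_0}(x)+P(\theta_k>x)\le \xi+P(\theta_k>x).
\]
To bound the Beta upper tail, write $P(\theta_k>x)=\frac{1}{\mathrm{B}(A,B+m)}\int_x^1 s^{A-1}(1-s)^{B+m-1}\,ds$. Bound $s^{A-1}\le D_A(x)$ on $[x,1]$: this holds because for $A\ge1$ the factor is at most $1$, and for $A<1$ it is at most $x^{A-1}$. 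Integrating $(1-s)^{B+m-1}$ gives $(1-x)^{B+m}/(B+m)$. Using $\mathrm{B}(A,B+m)^{-1}=\Gamma(A+B+m)/(\Gamma(A)\Gamma(B+m))$ and $\Gamma(B+m)(B+m)=\Gamma(B+m+1)$ yields exactly $L_0(m;x)$. Hence the error under $M_0$ is at most $2\xi$.

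\textbf{Biased type.} Let $\mathcal G$ be the event that all $m$ zeros are perceived as ones. Under Eq.~\ref{eq:confirmation}, the $\ell$-th step of the block (conditional on earlier confirmations) has prior mean $\tilde\mu=(A+\ell)/(A+B+\ell)$. Its suppression failure probability is therefore at most $C(B/(A+B+\ell))^p$. A union bound gives $P(\mathcal G^c)\le\varepsilon_\infty(A,B)\le\xi$ by hypothesis. On $\mathcal G$, let $\theta_k\sim\operatorname{Beta}(A+m,B)$. An error means some competitor beats $\theta_k$, so
\[
P(\text{error}\mid\mathcal G)\le P(\theta_k<1-y)+\sum_{j\neq k}\bigl(1-G_j(1-y)\bigr).
\]
The second term is at most $\xi$ by the choice of $y$. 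The first term is the upper tail of $1-\theta_k\sim\operatorname{Beta}(B,A+m)$ beyond $y$, and the computation of the unbiased case with the roles of $A$ and $B$ swapped bounds it by $L_1(m;y)$. The total error under $M_1$ is thus at most $3\xi$.

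\textbf{Existence and closed form of $m_\delta$.} Both $L_0(m;x)$ and $L_1(m;y)$ tend to $0$ as $m\to\infty$: the Gamma ratio grows only polynomially, like $m^{A-1}$ or $m^{B-1}$, while $(1-x)^m$ and $(1-y)^m$ decay geometrically. So the set defining $m_\delta$ is nonempty and its minimum is finite. The bound is also monotone for $m\ge m_\delta$: with $x$ fixed, $L_0$ is eventually decreasing, and in any case the argument above applies to every $m$ at which both inequalities hold.

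\textbf{Main obstacle.} The delicate step is the biased case. The perceived rewards there are random and may break the confirmation chain, so I must condition on $\mathcal G$ and check that the mean-based bound of Eq.~\ref{eq:confirmation} is applied with the correct running posterior $(A+\ell,B)$ at each step. On $\mathcal G^c$ I simply absorb everything into the error budget, without tracking the posterior there. A secondary subtlety is that "for all $m\ge m_\delta$" strictly needs eventual monotonicity of $L_0$ and $L_1$. If that is not automatic, I would state the guarantee for every $m$ satisfying both inequalities, which is all the construction actually uses.
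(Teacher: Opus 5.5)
Your proposal is correct and follows the paper's proof essentially step for step. It uses the same confirmation event with the union bound $P_{M_1}(E_m^c)\le\varepsilon_\infty(A,B)$, the same Beta-tail estimates via $s^{A-1}\le D_A(x)$ and the reflection $1-X'_m\sim\operatorname{Beta}(B,A+m)$, and the same total error $5\xi$ fed into the TV bound through the event $\{a_{\tau+m}=k\}$. Note that it is $5\xi<\delta$, not the $6\xi=\delta$ of your opening line, that yields the strict inequality $>1-\delta$.

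The persistence of both inequalities for all $m\ge m_\delta$, which you flag, is only asserted in the paper, but it does hold. We have $L_0(m+1)/L_0(m)=(1-x)(A+B+m)/(B+m+1)\le 1$ exactly when $x(A+B+m)\ge A-1$, and this condition is preserved as $m$ grows. It is also forced whenever $L_0(m)\le\xi<\tfrac12$: otherwise $A>1$ and $x$ lies below the median of $\operatorname{Beta}(A,B+m)$, which is at least $\min\{\text{mode},\text{mean}\}>(A-1)/(A+B+m)$, giving $L_0(m)\ge P(X_m>x)\ge\tfrac12$. The same argument applies symmetrically to $L_1$.
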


While Theorem~\ref{thm:identifiability} relies on forced action assignment, we extend this framework in Appendix~\ref{appendix:sequantial-id} to non-coercive settings where the Observer cannot force actions and intervenes strictly upon spontaneous arm selection. 
\textit{Crucially, the persistent self-confirmation that causes lock-in under passive observation is what betrays the biased agent under non-coercive interventions: while an unbiased user quickly abandons the zeroing arm, a biased user continues re-selecting it.}

\begin{remark}[Static channels can be blind to zero-reward interventions]\label{rem:static-blind}
The proof of Theorem~\ref{thm:identifiability} relies on Persistent Confirmation, which a static channel (Def.~\ref{def:static}) satisfies only if $u=v=1$. When $v=0$, any zero-reward intervention policy, coercive or not, that uses only the observer's information and independent randomization leaves the joint law of actions and interventions identical under the biased user in $\boldsymbol{\mu}$ and the unbiased user in $\boldsymbol{\nu}$ of Theorem~\ref{thm:non-identifiability}, so the confound survives (Appendix~\ref{proof:static-blind}).
\end{remark}
\section{Experiments}
\label{sec:experiments}

We evaluate our theoretical framework in Bernoulli bandits under Beta priors using Thompson Sampling, testing the boundary-tight dynamic kernel (\S\ref{sec:exp-setup}) and the static binary bias (Def. \ref{def:static}). Table~\ref{tab:overview} in Appendix~\ref{sec:exp-appendix} summarizes our findings.

\subsection{Experimental Setup}
\label{sec:exp-setup}

Every experimental instance evaluates a $K$-armed Bernoulli bandit with independent Beta priors under Thompson Sampling. Biased agents ($z=1$) operate under the canonical boundary-tight distortion kernel (\S\ref{sec:setup}), defined by $P_{\text{bias}}(\tilde r{=}1 \mid r{=}1, \tilde\mu) \equiv 1$ and:
\begin{equation}\label{eq:bias_xp}
P_{\text{bias}}(\tilde r{=}1 \mid r{=}0, \tilde\mu) = g(\tilde\mu) := \min\Big(1, \max\big(\gamma\tilde\mu, 1-C(1-\tilde\mu)^p\big)\Big)
\end{equation}
where $(\gamma, C, p)$ are parameters. This equation represents the minimal distortion satisfying~\ref{def:conf_bias}. 

We benchmark the coercive intervention block (Definition~\ref{def:interventions}) against the non-coercive identification policy (Proposition~\ref{appendix:sequantial-id}), implemented as distinct environment wrappers. All simulations run on a vectorized seed-batched framework across $10^4$--$2\times10^4$ independent seeds. Theoretical quantities---including the lock-in recursion probability $\rho_k$ (Theorem~\ref{thm:lock-in}), the intervention horizon $m_\delta$ (Theorem~\ref{thm:identifiability}), and posterior odds (Theorem~\ref{thm:non-identifiability})---are evaluated via exact numerical integration.\footnote{We report 95\% Wilson score confidence intervals for proportions, bootstrap confidence intervals for Total Variation (TV) distances, and standard errors (SEM) for continuous regret metrics.}

\subsection{The Confirmation Trap Is Real and Suboptimal}
\label{sec:exp-trap}

\textbf{Takeaway 1 (Bias is expensive)}: By $T{=}1{,}000$, the biased agent incurs $17\times$ the regret of its unbiased twin, while our theoretical recursion predicts empirical lock-in frequencies to within $\pm 0.008$

\begin{figure}[htbp]
\centering
\begin{minipage}{0.34\linewidth}
\centering
\includegraphics[width=\linewidth]{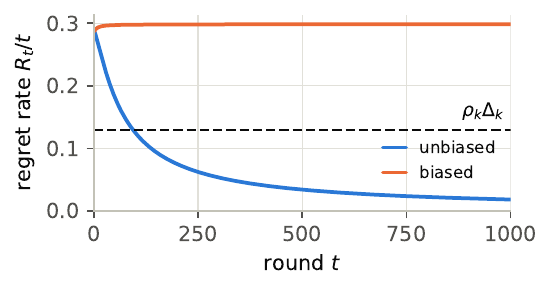}
\end{minipage}\hfill
\begin{minipage}{0.65\linewidth}
\centering
\includegraphics[width=\linewidth]{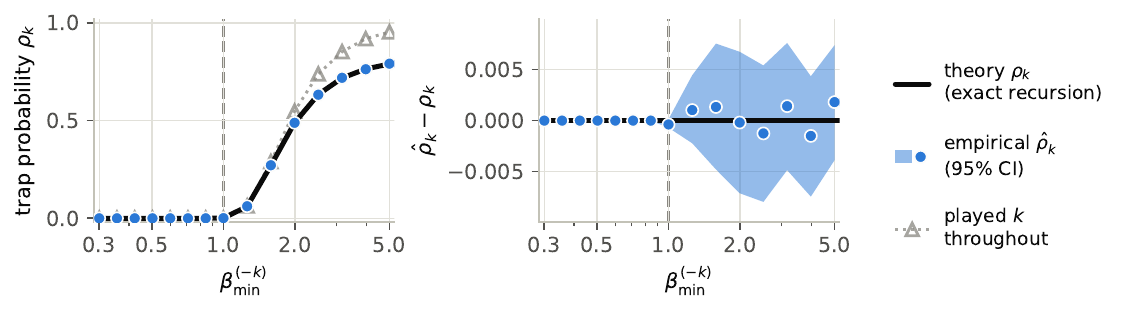}
\end{minipage}
\caption{\textbf{Empirical validation of confirmation trap dynamics (Theorem~\ref{thm:lock-in}).} \\
\textbf{Left:} Cumulative regret rate $R_t/t$ for $z=0$ vs.\ $z=1$ ($4$-arm bandit, $\boldsymbol\mu=(0.5,0.8,0.6,0.3)$, target prior $\mathrm{Beta}(2,1)$, competing priors $\mathrm{Beta}(1,3)$, kernel $(\gamma,C,p)=(0.5,1,2)$, $4{,}000$ seeds, $95\%$ SEM). Dashed line indicates lower bound $\rho_k\Delta_k $. \\
\textbf{Right:} Theoretical $\rho_k$ vs.\ empirical $\hat{\rho}_k$ over $\beta_{\min}^{(-k)}$ ($2$-arm bandit, $\boldsymbol\mu=(0.3,0.7)$, $20{,}000$ seeds/point, Wilson $95\%$ CI). Vertical dashed line marks $\beta_{\min}^{(-k)}=1$; triangles represent $\{a_t{=}k\ \forall t\}$.}
\label{fig:trap}
\end{figure}

Theorem~\ref{thm:lock-in} establishes that under $\beta_{\min}^{(-k)}>1$, a biased Thompson Sampling agent locks into a suboptimal arm with probability $\rho_k > 0$, incurring linear regret bounded below by $\mathbb{E}[R_t] \ge \rho_k \Delta_k t$. As shown in Figure~\ref{fig:trap} (left), the unbiased agent achieves sublinear logarithmic regret ($R_t/t \to 0$), whereas the biased agent plateaus at an empirical regret rate $R_t/t \approx 0.3$. This rate strictly exceeds the theoretical lower bound ($\rho_k \Delta_k \approx 0.129$) because the bound counts only trajectories trapped on arm $k$, while runs locked onto the other suboptimal arms also contribute regret.
Figure~\ref{fig:trap} (right) validates the theoretical recursion $\rho_k$ across prior configurations (e.g., $0.787 \pm 0.006$ empirical vs.\ $0.790$ theory at $\beta_{\min}^{(-k)}=5$). Although Theorem~\ref{thm:lock-in} states $\beta_{\min}^{(-k)}>1$ as a sufficient condition, we observe an empirical boundary at $\beta_{\min}^{(-k)}=1$. Below this threshold ($\beta \le 1$), prior mass near $1$ allows competing arms to occasionally generate high Thompson samples and break the trap. Finally, hollow triangles confirm that perpetual selection $\{a_t{=}k \ \forall t\}$ strictly bounds the theoretical lock-in event $\mathcal{E}_k$.

\subsection{Actions Alone Cannot Reveal User Bias}
\label{sec:exp-nonid}

\textbf{Takeaway 2 (Bias is invisible)}: Gradient-Boosted Trees, Logistic Regression, and the Bayes-optimal rule all classify agent bias at random accuracy ($0.500 \pm 0.003$) from action histories alone. The action-only posterior remains frozen at its prior.

We test Theorem~\ref{thm:non-identifiability} in Figure~\ref{fig:passive-impossibility} (extended trajectories in Appendix~\ref{sec:app-nonid}).
As shown in Figure~\ref{fig:passive-impossibility} (left), non-linear classifiers (GBTs, Logistic Regression) and Monte Carlo estimates of the Bayes-optimal decision rule achieve empirical classification accuracies statistically indistinguishable from $0.500$ across all horizons $T \in [25, 800]$. Extended evaluations across six parameter configurations (Table~\ref{tab:nonid-appendix}) confirm total variation bounds $\mathrm{TV}(Q_0, Q_1) \le 0.016$, matching the finite-sample noise floor $\mathrm{TV}_0$.
Figure~\ref{fig:passive-impossibility} (right) illustrates the frozen mechanism: while a reward-aware observer updates its belief as environmental payoffs are revealed, an observer restricted strictly to action histories $h_t^a$ experiences zero information gain, $P(M_1 \mid h_t^a) \equiv \pi_1 = 0.5$ for all $t$. Thus, passive observation cannot separate bias from environmental confounding, rendering active interventions necessary.

\begin{figure}[t]
\vspace{-0.5cm}
\centering
\begin{minipage}{0.49\linewidth}
\centering
\includegraphics[width=\linewidth]{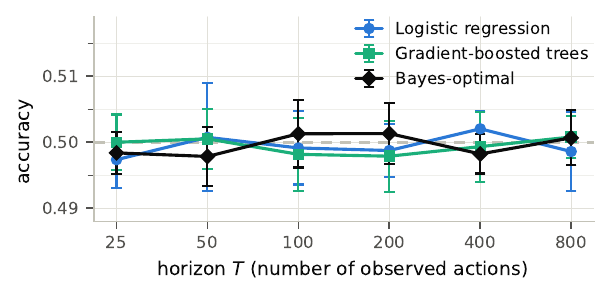}
\end{minipage}\hfill
\begin{minipage}{0.49\linewidth}
\centering
\includegraphics[width=\linewidth]{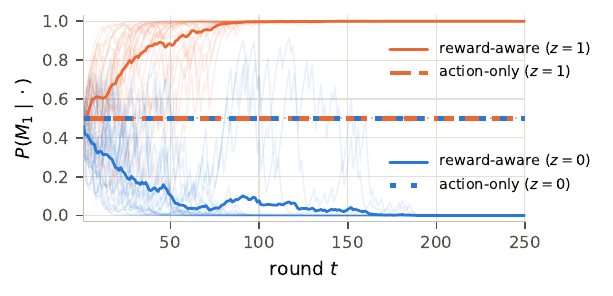}
\end{minipage}
\caption{\textbf{Passive observation fails to identify user bias (Theorem~\ref{thm:non-identifiability}).}  \\
\textbf{Left:} Out-of-sample classification accuracy for user type $z \in \{0, 1\}$ using action histories $h_T^a$ up to $T=800$ ($10{,}000$ trajectories/class, $5$-fold CV, $95\%$ CIs). All decision models stay at random level.  \\
\textbf{Right:} Posterior trajectory $P(M_1 \mid h_t^a)$ for an action-only observer (thick dashed line, frozen at $\pi_1=0.5$) versus a reward-aware oracle (thin lines, separating toward $0$ or $1$, bold is average).}
\label{fig:passive-impossibility}
\end{figure}

\subsection{Active Interventions Restore Identifiability}
\label{sec:exp-active}

\textbf{Takeaway 3 (Testing is cheap, not testing is not)}: Both active policies reach $90\%$ separation within $16$–$27$ interaction rounds, an order of magnitude faster than the worst-case bound $m_\delta=260$. Crucially, the non-coercive policy imposes $40\%$ fewer collateral forced failures on unbiased users.

\begin{figure}[htbp]
\centering
\begin{minipage}{0.49\linewidth}
\centering
\includegraphics[width=\linewidth]{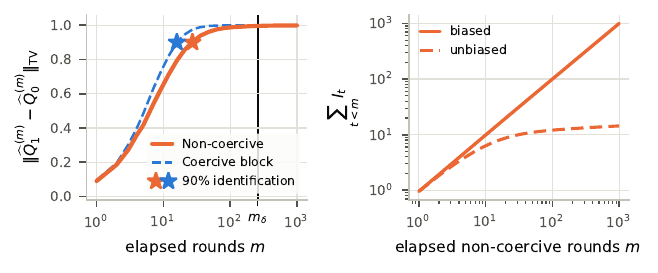}
\end{minipage}\hfill
\begin{minipage}{0.47\linewidth}
\centering
\includegraphics[width=\linewidth]{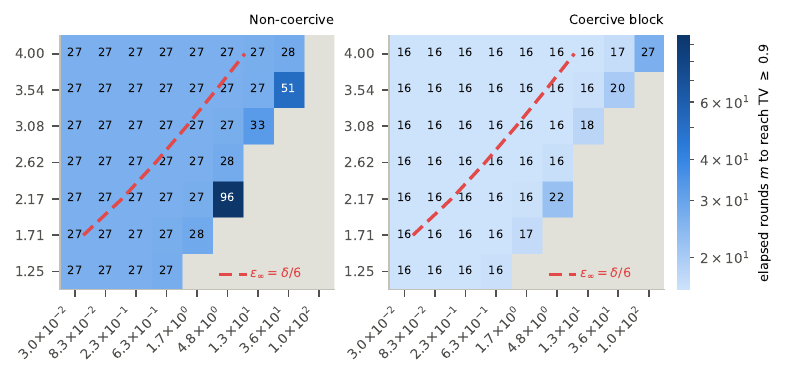}
\end{minipage}
\caption{\textbf{Active identification performance and feasibility phase diagram.} \\
\textbf{Left:} Total Variation separation $\Vert \widehat{Q}_1^{(m)} - \widehat{Q}_0^{(m)} \Vert_{\mathrm{TV}}$ vs.\ forced failures borne by an unbiased user under coercive vs.\ non-coercive policies ($20{,}000$ runs, diagnostic prior $\mathrm{Beta}(5,1)$, competing arms at $\mathrm{Beta}(2,4)$ and $\mathrm{Beta}(3,5)$, kernel $(\gamma,C,p)=(0.5,0.01,2)$, $\delta=0.1$). \\
\textbf{Right:} Elapsed rounds to reach $\mathrm{TV} \ge 0.9$ across slack $C$ and exponent $p$ for non-coercive (top) vs.\ coercive (bottom) policies ($8{,}000$ seeds/cell; grey cells indicate non-reachability within $600$ rounds). Dashed line marks Theorem~\ref{thm:identifiability}'s feasibility boundary $\varepsilon_\infty \le \delta/6$.}
\label{fig:headline}
\end{figure}

Theorem~\ref{thm:identifiability} shows that coercive zero-reward blocks (Def.~\ref{def:interventions}) identify a belief-dependent bias within a deterministic bound $m_\delta$, while Prop.~\ref{appendix:sequantial-id} extends to non-coercive settings when spontaneous diagnostic selection occurs. As shown in Figure~\ref{fig:headline} (left), active interventions achieve distribution separation ($\mathrm{TV} \to 1$). Target separation ($\mathrm{TV} \ge 0.9$) requires only $16$ rounds (coercive) and $27$ rounds (non-coercive), well below $m_\delta = 260$. By halting interventions once unbiased agents update beliefs, the non-coercive policy caps expected unbiased user cost at $9.6$ forced failures, a $40\%$ reduction over coercive execution ($16.1$). Figure~\ref{fig:headline} (right) maps identification across parameters $(C, p)$. Within the certified boundary, separation times remain invariant at $16$--$27$ rounds. Empirically, active testing succeeds up to two orders of magnitude beyond theoretical bounds before reaching non-reachability ($>600$ rounds). As bias weakens, action distributions merge, making coercive blocks necessary for broader parameter coverage.
\section{Discussion}

Our results isolate a structural failure mode for alignment from behavior: when actions are jointly shaped by an unknown environment and a distortion of experienced evidence, passive observations can not reveal which mechanism generated them. Within our model, active interventions can break this limitation and turn uninformative action traces into signal. 
We establish this under controlled assumptions, not a deployment-ready test of human cognition.

\paragraph{Scope of the formal claims.}
The paper uses different bias models for different questions (Table~\ref{tab:scope}). Exact passive non-identifiability is proved for a memoryless static channel, whereas the confirmation trap and active identification concern a belief-dependent kernel. This split is informative rather than incidental: zero-reward interventions exploit persistent confirmation and can be uninformative for a static channel (Remark~\ref{rem:static-blind}). Two cells remain open: whether passive observation can identify a belief-dependent kernel, and which interventions identify a static channel with $v>0$. Human choices may instead reflect graded and time-varying bias, risk preferences, habits, strategic behavior, changing goals, or simple model misspecification. Our identification guarantees hold relative to the assumed model class. Future work should extend this framework to continuous and heterogeneous types, dynamic kernels or alternative policies.

\paragraph{From separability to useful intervention.}
The computable horizon $m_\delta$ is a conservative upper bound. While empirical separation requires far fewer rounds ($16$--$27$), even this lower range is far too disruptive for real-world interactions, where forcing $16$ consecutive failures would cause severe user frustration, loss of trust, or abandonment. Furthermore, identifying a bias does not imply correcting it or improving user welfare. Practical protocols must therefore adapt intervention strength to real-time uncertainty, stop as soon as information gain no longer justifies the user burden, and measure downstream welfare rather than separability alone.

\paragraph{Empirical and human validity.}
Our simulations confirm our analytical bounds under set assumptions, but they do not validate these behavioral dynamics in human users. When an LLM acts as the observer, it extracts signals to detect user bias with interventions, but the gap between a peak AUROC of $0.92$ and a lower accuracy highlights a critical calibration bottleneck. Additionally, we show that LLMs have some limitations mirroring self-confirming bias. Validating distortion kernels and calibrating stopping rules across heterogeneous users require joint studies with behavioral scientists.

\paragraph{Intervention design and ethics.}
While forcing zero rewards is analytically clean, it risks manipulating user experience. Non-coercive protocols reduce forced failures, but they do not resolve core issues of consent, transparency, or false-positive harms. Fabricating outcomes should be strictly prohibited. Safe evaluations must rely on consented, low-stakes environments with verifiable counter-evidence. These frameworks must account for cases where user reflects a correct belief rather than a harmful cognitive bias, requiring ground-truth benchmarks. Finally, randomized controls and calibrated interventions are necessary to separate belief updates from user fatigue, annoyance, or distrust.

The broader lesson is that behavioral agreement does not prove genuine preference, and no volumes of logs repairs a structural confound.
Active elicitation can provide missing signal, but only if its calibration, human costs, and governance are treated as core scientific questions rather than implementation details.

\begin{ack}
    Tianyi Qiu and Zhonghao He are supported by the Cosmos Fellowship, affiliated with the Oxford HAI Lab while working on this project.
\end{ack}


{
\small

\bibliographystyle{unsrt}
\bibliography{references} 

}

\newpage
\appendix

\section{Proofs}

\subsection{Proof of \ref{thm:lock-in}}\label{proof:lock-in}

Let
\[
\mathcal E_{k,n}:=\bigcap_{t=1}^{n}\{a_t=k,\ \tilde r_t=1\},\qquad \mathcal E_{k,0}:=\Omega
\]
The events form a decreasing sequence and satisfy
\[
\mathcal E_k=\bigcap_{n=0}^{\infty}\mathcal E_{k,n}
\]
Conditional on $\mathcal E_{k,n}$, arm $k$ has been selected $n$ times and all its subjectively perceived rewards have been equal to one. By Beta--Bernoulli conjugacy, its subjective posterior is therefore
\[
\mu_k\mid\mathcal E_{k,n}\sim\operatorname{Beta}(\alpha_k+n,\beta_k)
\]
Because no competing arm has been selected on $\mathcal E_{k,n}$, the posterior of every arm $j\neq k$ remains equal to its prior:
\[
\mu_j\mid\mathcal E_{k,n}\sim\operatorname{Beta}(\alpha_j,\beta_j),\qquad j\neq k
\]
In particular, the posterior mean of arm $k$ on this history is
\[
m_n:=\tilde\mu_k^{(n)}=\frac{\alpha_k+n}{\alpha_k+\beta_k+n}
\]
and hence
\[
1-m_n=\frac{\beta_k}{\alpha_k+\beta_k+n}
\]

We first bound the probability of a contradictory subjective perception. Define
\[
u_n:=P(\tilde r_{n+1}=0\mid\mathcal E_{k,n},a_{n+1}=k)
\]
Conditional on selecting arm $k$, the objective reward is distributed as $\operatorname{Bernoulli}(\mu_k)$. Therefore,
\[
u_n=\mu_kP_{\mathrm{bias}}(\tilde r=0\mid r=1,m_n)+(1-\mu_k)P_{\mathrm{bias}}(\tilde r=0\mid r=0,m_n)
\]
By Assumption 2, the first term is zero and the second term is bounded by $C(1-m_n)^p$. Thus,
\[
u_n\leq(1-\mu_k)C(1-m_n)^p=(1-\mu_k)C\left(\frac{\beta_k}{\alpha_k+\beta_k+n}\right)^p
\]
Consequently, there exists a finite constant $A_k>0$ such that
\[
u_n\leq A_k(n+1)^{-p}
\]
Since $p>1$, it follows that
\[
\sum_{n=0}^{\infty}u_n<\infty
\]
Moreover, every factor $1-u_n$ is strictly positive. Indeed, if $r=1$, Assumption 2 implies $\tilde r=1$ almost surely. If $r=0$, then $m_n>0$ and Assumption 1 gives
\[
P_{\mathrm{bias}}(\tilde r=1\mid r=0,m_n)=\mathbb E[\tilde r\mid r=0,m_n]\geq\gamma m_n>0
\]
It follows that
\[
1-u_n\geq\mu_k+(1-\mu_k)\gamma m_n>0
\]

We next bound the probability that Thompson Sampling abandons arm $k$. Let
\[
X_n\sim\operatorname{Beta}(\alpha_k+n,\beta_k)
\]
denote the Thompson sample for arm $k$, and let
\[
Y_j\sim\operatorname{Beta}(\alpha_j,\beta_j),\qquad j\neq k
\]
denote the independent Thompson samples of the competing arms. Define the conditional escape probability
\[
q_n:=P(a_{n+1}\neq k\mid\mathcal E_{k,n})=P\left(\max_{j\neq k}Y_j>X_n\right)
\]
For each $j\neq k$, there exists a finite constant $D_j>0$ such that the upper tail of the Beta distribution satisfies
\[
P(Y_j>x)\leq D_j(1-x)^{\beta_j},\qquad x\in[0,1]
\]
To verify this bound, for $x\geq1/2$ we have
\[
P(Y_j>x)=\frac{1}{B(\alpha_j,\beta_j)}\int_x^1y^{\alpha_j-1}(1-y)^{\beta_j-1}\,dy\leq\frac{M_j}{\beta_jB(\alpha_j,\beta_j)}(1-x)^{\beta_j}
\]
where $M_j:=\sup_{y\in[1/2,1]}y^{\alpha_j-1}<\infty$. For $x<1/2$, the same inequality holds after increasing the constant, since
\[
P(Y_j>x)\leq1\leq2^{\beta_j}(1-x)^{\beta_j}
\]
Using the union bound and conditioning on $X_n$, we obtain
\[
q_n\leq\sum_{j\neq k}P(Y_j>X_n)\leq\sum_{j\neq k}D_j\mathbb E[(1-X_n)^{\beta_j}]
\]
The relevant Beta moment is
\[
\mathbb E[(1-X_n)^{\beta_j}]=\frac{B(\alpha_k+n,\beta_k+\beta_j)}{B(\alpha_k+n,\beta_k)}
\]
Using the definition of the Beta function,
\[
\mathbb E[(1-X_n)^{\beta_j}]=\frac{\Gamma(\beta_k+\beta_j)}{\Gamma(\beta_k)}\frac{\Gamma(\alpha_k+\beta_k+n)}{\Gamma(\alpha_k+\beta_k+\beta_j+n)}
\]
The standard Gamma-ratio asymptotic gives
\[
\frac{\Gamma(\alpha_k+\beta_k+n)}{\Gamma(\alpha_k+\beta_k+\beta_j+n)}=O(n^{-\beta_j})
\]
Hence, for each $j\neq k$, there exists a finite constant $G_j>0$ such that
\[
\mathbb E[(1-X_n)^{\beta_j}]\leq G_j(n+1)^{-\beta_j}
\]
It follows that there exists a finite constant $G>0$ such that
\[
q_n\leq G(n+1)^{-\beta_{\min}^{(-k)}}
\]
Because $\beta_{\min}^{(-k)}>1$, we obtain
\[
\sum_{n=0}^{\infty}q_n<\infty
\]
Furthermore, $q_n<1$ for every finite $n$, because all proper Beta distributions have positive density on $(0,1)$ and hence
\[
P\left(X_n>\max_{j\neq k}Y_j\right)>0
\]

By the chain rule of conditional probability,
\[
P(\mathcal E_{k,n+1}\mid\mathcal E_{k,n})=(1-q_n)(1-u_n)
\]
Therefore, for every $N\geq1$,
\[
P(\mathcal E_{k,N})=\prod_{n=0}^{N-1}(1-q_n)(1-u_n)
\]
Since $\sum_nq_n<\infty$ and $\sum_nu_n<\infty$, there exists $N_0$ such that $q_n\leq1/2$ and $u_n\leq1/2$ for all $n\geq N_0$. Using $\log(1-x)\geq-2x$ for $x\in[0,1/2]$, we have
\[
\sum_{n=N_0}^{\infty}\log\bigl((1-q_n)(1-u_n)\bigr)\geq-2\sum_{n=N_0}^{\infty}(q_n+u_n)>-\infty
\]
The finitely many factors preceding $N_0$ are strictly positive. Consequently,
\[
\prod_{n=0}^{\infty}(1-q_n)(1-u_n)>0
\]
By continuity of probability for decreasing events,
\[
\rho_k:=P(\mathcal E_k)=\lim_{N\to\infty}P(\mathcal E_{k,N})=\prod_{n=0}^{\infty}(1-q_n)(1-u_n)>0
\]
Since $\mathcal E_k\subseteq\bigcap_{t=1}^{\infty}\{a_t=k\}$, this proves
\[
P(a_t=k\text{ for every }t\geq1)\geq\rho_k>0
\]

On $\mathcal E_k$, the posterior mean of arm $k$ satisfies
\[
\tilde\mu_k^{(n)}=\frac{\alpha_k+n}{\alpha_k+\beta_k+n}\longrightarrow1
\]
Moreover,
\[
P(a_{n+1}=k\mid\mathcal E_{k,n})=1-q_n\longrightarrow1
\]
Thus, the subjective evidence becomes increasingly favorable to arm $k$, while the probability of exploring another arm vanishes.

Finally, define the pseudo-regret by
\[
R_T:=\sum_{t=1}^T(\mu^\star-\mu_{a_t})
\]
On $\mathcal E_k$, the agent selects arm $k$ at every time step, so
\[
R_T=\Delta_kT
\]
Since pseudo-regret is nonnegative outside $\mathcal E_k$,
\[
\mathbb E[R_T]\geq P(\mathcal E_k)\Delta_kT=\rho_k\Delta_kT
\]
Therefore, the confirmation trap occurs with strictly positive probability and produces linear expected pseudo-regret. \hfill $\square$

\subsection{Proof of \ref{thm:non-identifiability}}\label{proof:non-identifiability}

Let
\[
G_t^1:=\big((a_s,\tilde r_s)\big)_{s=1}^{t-1}
\]
denote the recorded history of the biased agent, and let
\[
G_t^0:=\big((a_s,r_s^0)\big)_{s=1}^{t-1}
\]
denote the objective history of the unbiased agent, where $r_s^0$ denotes the reward observed under the effective environment $\boldsymbol{\nu}$. Histories are ordered sequences, and $g\oplus(k,y)$ denotes the history $g$ extended by the pair $(k,y)$.

Throughout, the objective reward $r_t$ is drawn from $\operatorname{Bernoulli}(\mu_{a_t})$ (resp.\ $\operatorname{Bernoulli}(\nu_{a_t})$ under $M_0$) independently of the past given $a_t$, and the perception noise at round $t$ is independent of the past and of $a_t$ given $r_t$. The posterior is a fixed function of the recorded history, common to both agents, with a common convention when the prior predictive probability of a history vanishes (this never happens with Beta priors). Thompson draws are fresh at each round and independent of rewards and perception noise, and ties are broken by a common rule. Hence the conditional law of $a_t$ given everything that happened before round $t$ depends only on the recorded history, through a kernel $\pi_{\mathrm{TS}}(k\mid g)$ that is the same under both models (the selection probability of Section~\ref{sec:setup} evaluated at the posterior computed from $g$). This is also why the observer's action likelihood in Section~\ref{sec:setup} is an expectation: the belief $\tilde b_{t-1}$ depends on rewards that $\mathcal{A}_{\text{obs}}$ does not see, so it is not a function of $h_t^a$.

Consider the biased model $M_1$. Conditional on any recorded history $G_t^1=g_t$ and on selecting arm $k$, the objective reward remains distributed according to the stationary Bernoulli environment:
\[
P_{M_1}(r_t=1\mid G_t^1=g_t,a_t=k)=\mu_k
\]
Using the memoryless property of the perceptual channel and the law of total probability, we obtain
\[
P_{M_1}(\tilde r_t=1\mid G_t^1=g_t,a_t=k)=u\mu_k+v(1-\mu_k)=\nu_k
\]
Therefore,
\[
\tilde r_t\mid(G_t^1=g_t,a_t=k)\sim\operatorname{Bernoulli}(\nu_k)
\]

Under the unbiased model $M_0$, arm $k$ has objective mean $\nu_k$. Hence,
\[
r_t^0\mid(G_t^0=g_t,a_t=k)\sim\operatorname{Bernoulli}(\nu_k)
\]
Thus, conditional on the same recorded history and selected action, the next recorded observation has the same distribution under both models:
\[
P_{M_1}(\tilde r_t=y\mid G_t^1=g_t,a_t=k)=P_{M_0}(r_t^0=y\mid G_t^0=g_t,a_t=k)
\]
for every $y\in\{0,1\}$.

We now prove by induction that the recorded-history processes $(G_t^1)_{t\geq1}$ and $(G_t^0)_{t\geq1}$ have the same distribution. At $t=1$, both histories are empty, so $G_1^1$ and $G_1^0$ trivially have the same distribution; the induction step below then covers the first action $a_1$.

Suppose that $G_t^1$ and $G_t^0$ have the same distribution for some $t\geq1$. Conditional on any common realization
\[
g_t=\big((k_s,y_s)\big)_{s=1}^{t-1}
\]
with positive probability, both agents have updated the same initial prior using the same sequence of recorded action--reward pairs and the same Bernoulli likelihood. Their posterior distributions are therefore identical.

Because both agents apply the same Thompson Sampling rule to these identical posteriors, their conditional action probabilities coincide:
\[
P_{M_1}(a_t=k\mid G_t^1=g_t)=P_{M_0}(a_t=k\mid G_t^0=g_t)=\pi_{\mathrm{TS}}(k\mid g_t)
\]
for every arm $k$.

Conditional on the selected action, the distributions of their next recorded observations also coincide:
\[
P_{M_1}(\tilde r_t=y\mid G_t^1=g_t,a_t=k)=P_{M_0}(r_t^0=y\mid G_t^0=g_t,a_t=k)
\]
Consequently, for every possible extension $(k,y)$ of $g_t$ and $z\in\{0,1\}$,
\[
P_{M_z}\big(G_{t+1}^z=g_t\oplus(k,y)\big)=P_{M_z}(G_t^z=g_t)\,\pi_{\mathrm{TS}}(k\mid g_t)\,\nu_k^{\,y}(1-\nu_k)^{1-y}
\]
where every factor on the right-hand side is the same under both models. Hence
\[
P_{M_1}(G_{t+1}^1=g_t\oplus(k,y))=P_{M_0}(G_{t+1}^0=g_t\oplus(k,y))
\]
It follows that $G_{t+1}^1$ and $G_{t+1}^0$ have the same distribution. By induction, the joint distributions of the recorded histories coincide at every horizon.

Marginalizing out the recorded rewards yields
\[
P_{M_1}^{a,T}=P_{M_0}^{a,T}
\]
for every $T\geq1$. Hence, the biased agent in environment $\boldsymbol{\mu}$ and the unbiased agent in environment $\boldsymbol{\nu}$ are observationally equivalent from the perspective of an action-only observer.

It remains to establish the statistical lower bound. Let
\[
Q_T:=P_{M_0}^{a,T}=P_{M_1}^{a,T}
\]
denote their common action-history distribution, and define
\[
q_T:=Q_T(\widehat z_T=1)
\]
Under the unbiased model, the estimator's probability of error is
\[
P_{M_0}(\widehat z_T=1)=q_T
\]
Under the biased model, its probability of error is
\[
P_{M_1}(\widehat z_T=0)=1-q_T
\]
Therefore,
\[
P_{M_0}(\widehat z_T=1)+P_{M_1}(\widehat z_T=0)=1
\]
Consequently,
\[
\max\left\{P_{M_0}(\widehat z_T=1),P_{M_1}(\widehat z_T=0)\right\}\geq\frac12
\]
This bound holds for every estimator and every horizon $T$. Therefore, no sequence of action-only estimators can consistently distinguish the two models, and no action-only estimator can be uniformly consistent over model classes containing both models. The same bound holds for randomized estimators whose internal randomness is independent of the interaction, by conditioning on that randomness.

Finally, we consider the observer's posterior odds. By the observational equivalence established above, the two models assign the same likelihood to every action history $h_{T+1}^a=(a_1,\dots,a_T)$:
\[
P(h_{T+1}^a\mid M_0)=P(h_{T+1}^a\mid M_1)=:L(h_{T+1}^a)
\]
For every action history with $L(h_{T+1}^a)>0$ (hence almost surely), and with $\pi_0=1-\pi_1$, Bayes' rule therefore gives
\[
P(M_1\mid h_{T+1}^a)=\frac{\pi_1L(h_{T+1}^a)}{\pi_0L(h_{T+1}^a)+\pi_1L(h_{T+1}^a)}=\frac{\pi_1}{\pi_0+\pi_1}=\pi_1
\]
Thus, observing additional actions cannot update the observer's posterior odds between the two observationally equivalent models. \hfill $\square$

\subsection{Proof of Remark~\ref{rem:static-blind}}\label{proof:static-blind}

Eq.~\ref{eq:confirmation} requires $P_{\mathrm{bias}}(\tilde r=1\mid r=1)=1$, i.e.\ $u=1$, and $1-v \le C(1-\tilde\mu)^p$ for $\tilde\mu$ arbitrarily close to $1$, i.e.\ $v=1$. A static channel therefore satisfies Persistent Confirmation only if $u=v=1$.

Now let $v=0$. Consider any intervention policy, coercive or not, in which the forced action (if any) at round $t$ is chosen from $(a_{1:t-1}, I_{1:t-1})$ and $I_t$ from $(a_{1:t}, I_{1:t-1})$, possibly with randomization independent of rewards, perception noise and Thompson draws. Both users record forced rounds and zeroed rewards as ordinary observations, as in Theorem~\ref{thm:identifiability}. We show that the biased user in $\boldsymbol{\mu}$ and the unbiased user in $\boldsymbol{\nu}$ induce identical joint laws of $(a_{1:T}, I_{1:T})$ for every $T$. The induction of Appendix~\ref{proof:non-identifiability} goes through with the state augmented by the intervention history. Conditionally on the recorded history, $I_{1:t-1}$, $a_t$ and $I_t$, the recorded outcome is $0$ under both models when $I_t=1$ ($r_t=0$ and $v=0$ give $\tilde r_t=0$), and $\operatorname{Bernoulli}(\nu_{a_t})$ under both models when $I_t=0$. The forcing and intervention kernels depend only on the observer's information, so they are the same under both models. Consequently, every estimator based on actions and interventions is at chance level and $P(M_1 \mid a_{1:T}, I_{1:T}) = \pi_1$ almost surely.

For $v>0$, intervened rounds are perceived as $\operatorname{Bernoulli}(v)$ under $M_1$ but recorded as $0$ under $M_0$, so the confound breaks, but Theorem~\ref{thm:identifiability} does not quantify the resulting separation. By contrast, a user satisfying Persistent Confirmation whose confidence in arm $k$ is already high ($\varepsilon_\infty(A,B)\le\delta/6$) perceives zeroed rewards as successes with probability at least $1-C(1-\tilde\mu)^p$, which tends to one along confirming paths; this is what the block of Theorem~\ref{thm:identifiability} exposes. \hfill $\square$

\subsection{Proof of \ref{thm:identifiability}}\label{proof:identifiability}

Let
\[
E_m:=\bigcap_{\ell=0}^{m-1}\{\tilde r_{\tau+\ell}=1\}
\]
denote the event that the biased user transforms every injected objective failure into a perceived success.

Conditional on $E_\ell$, the biased posterior of arm $k$ after $\ell$ interventions is
\[
\operatorname{Beta}(A+\ell,B)
\]
Its posterior mean is
\[
\tilde\mu_{k,\ell}=\frac{A+\ell}{A+B+\ell}
\]
and hence
\[
1-\tilde\mu_{k,\ell}=\frac{B}{A+B+\ell}
\]
By Assumption 2,
\[
P_{M_1}\left(\tilde r_{\tau+\ell}=0\mid E_\ell\right)\leq C\left(\frac{B}{A+B+\ell}\right)^p
\]
Decomposing according to the first perceived failure and applying a union bound yields
\[
P_{M_1}(E_m^c)\leq CB^p\sum_{\ell=0}^{m-1}(A+B+\ell)^{-p}
\]
Therefore,
\[
P_{M_1}(E_m^c)\leq\varepsilon_\infty(A,B)
\]

On $E_m$, the biased posterior of arm $k$ is
\[
\operatorname{Beta}(A+m,B)
\]
and its posterior mean is
\[
\frac{A+m}{A+B+m}\longrightarrow1
\]

Under $M_0$, every injected reward is observed as an objective failure. The posterior of arm $k$ is therefore
\[
\operatorname{Beta}(A,B+m)
\]
and its posterior mean is
\[
\frac{A}{A+B+m}\longrightarrow0
\]

Let
\[
X_m\sim\operatorname{Beta}(A,B+m)
\]
For a fixed threshold $x\in(0,1)$,
\[
P(X_m>x)=\frac{1}{B(A,B+m)}\int_x^1s^{A-1}(1-s)^{B+m-1}ds
\]
Since
\[
s^{A-1}\leq D_A(x)
\]
on $[x,1]$, it follows that
\[
P(X_m>x)\leq D_A(x)\frac{\Gamma(A+B+m)}{\Gamma(A)\Gamma(B+m+1)}(1-x)^{B+m}
\]
Thus,
\[
P(X_m>x)\leq L_0(m;x)
\]
The Gamma-function ratio grows only polynomially in $m$, whereas $(1-x)^m$ decreases exponentially. Hence,
\[
L_0(m;x)\longrightarrow0
\]

Now let
\[
X_m'\sim\operatorname{Beta}(A+m,B)
\]
Since $1-X_m'$ has distribution $\operatorname{Beta}(B,A+m)$, the same calculation gives
\[
P(X_m'\leq1-y)\leq L_1(m;y)
\]
and
\[
L_1(m;y)\longrightarrow0
\]

Under $M_0$, if arm $k$ is selected at the diagnostic round, then either $X_m>x$ or the Thompson draw of competitor $j_0$ is at most $x$. Therefore,
\[
Q_0^{(m)}(k)\leq L_0(m;x)+G_{j_0}(x)
\]

Conditional on $E_m$, arm $k$ is selected under $M_1$ whenever $X_m'>1-y$ and every competitor draw is below $1-y$. Consequently,
\[
Q_1^{(m)}(k\mid E_m)\geq1-L_1(m;y)-\sum_{j\neq k}\left[1-G_j(1-y)\right]
\]
After marginalizing over the latent perceived rewards,
\[
Q_1^{(m)}(k)\geq1-\varepsilon_\infty(A,B)-L_1(m;y)-\sum_{j\neq k}\left[1-G_j(1-y)\right]
\]

By construction,
\[
\varepsilon_\infty(A,B)\leq\xi
\]
\[
G_{j_0}(x)\leq\xi
\]
\[
\sum_{j\neq k}\left[1-G_j(1-y)\right]\leq\xi
\]
For every $m\geq m_\delta$,
\[
L_0(m;x)\leq\xi
\]
and
\[
L_1(m;y)\leq\xi
\]
Therefore,
\[
Q_1^{(m)}(k)-Q_0^{(m)}(k)\geq1-5\xi
\]
Since $\xi=\delta/6$,
\[
1-5\xi=1-\frac{5\delta}{6}>1-\delta
\]
Finally, total variation dominates the probability difference of every measurable event. In particular,
\[
\left\|Q_1^{(m)}-Q_0^{(m)}\right\|_{\mathrm{TV}}\geq\left|Q_1^{(m)}(k)-Q_0^{(m)}(k)\right|>1-\delta
\]
\hfill $\square$
\section{Reward-only sequential identification}\label{appendix:sequantial-id}
    Suppose that the Observer cannot force actions and uses the policy
    \[
    I_s=\mathbf 1\{a_s=k\}
    \]
    during a diagnostic window beginning at time $\tau$. Define the no-switch event
    \[
    S_m:=\bigcap_{\ell=0}^{m-1}\{a_{\tau+\ell}=k\}
    \]
    Let
    \[
    q_{1,\ell}(k):=q_k\left(\operatorname{Beta}(A+\ell,B),b_{-k}\right)
    \]
    denote the Thompson Sampling probability of choosing arm $k$ after $\ell$ perceived successes, with the competitor posteriors held fixed. If
    \[
    \varepsilon_\infty(A,B)+\sum_{\ell=0}^{\infty}\left[1-q_{1,\ell}(k)\right]\leq\varepsilon
    \]
    then
    \[
    P_{M_1}(S_m)\geq1-\varepsilon
    \]
    Under $M_0$, define
    \[
    q_{0,\ell}(k):=q_k\left(\operatorname{Beta}(A,B+\ell),b_{-k}\right)
    \]
    Then
    \[
    P_{M_0}(S_m)=\prod_{\ell=0}^{m-1}q_{0,\ell}(k)\longrightarrow0
    \]
    Consequently,
    \[
    \left\|P_{M_1}^{a,m}-P_{M_0}^{a,m}\right\|_{\mathrm{TV}}\geq P_{M_1}(S_m)-P_{M_0}(S_m)
    \]
    and the action-path distributions become arbitrarily well separated whenever the confirming-path escape probabilities are summable.

\begin{proof}
    The result follows from a sequential union bound over perceived failures and Thompson Sampling deviations under $M_1$. Under $M_0$, conditional on the no-switch path, arm $k$ accumulates deterministic failures, giving the displayed product representation. The factors $q_{0,\ell}(k)$ converge to zero, so the product converges to zero.
\end{proof}
\section{Additional Experimental Results}
\label{sec:exp-appendix}

This appendix presents supplementary simulation figures, full parameter sweeps, and diagnostic tests referenced in the main text (\S\ref{sec:exp-appendix-supp}), followed by empirical grounding experiments evaluating Large Language Models (\S\ref{sec:exp-llm}).

\begin{table}[htbp]
\centering
\small
\renewcommand{\arraystretch}{1.4}
\caption{\textbf{Overview of experimental findings.} Key quantitative results and where to find them.}
\label{tab:overview}
\makebox[\textwidth][c]{%
\begin{tabular}{p{0.30\textwidth} p{0.10\textwidth} p{0.39\textwidth} c}
\toprule
\textbf{Research Question} & \textbf{Outcome} & \textbf{Key Result / Metric} & \textbf{Ref.} \\
\midrule
How frequent and costly are confirmation traps? & High Cost & Regret increases $\times 17$ ($R_t/t \approx 0.3$ vs.\ $0$); $\hat{\rho}_k$ matches theory within $\pm 0.008$ & \S\ref{sec:exp-trap} \\
Can passive observation distinguish biased agents? & No & Accuracy $\approx 50\%$ (across GBTs, LogReg, Bayes-optimal, and LLMs) & \S\ref{sec:exp-nonid}, \S\ref{sec:app-nonid} \\
Do active interventions separate bias efficiently? & Yes & Target separation ($\text{TV} \ge 0.9$) in 16–27 rounds vs.\ worst-case bound $m_\delta = 260$ & \S\ref{sec:exp-active} \\
Do theoretical signatures transfer to LLMs? & Yes & LLM observers reach $0.92$ AUROC under active logs & App.~\ref{sec:exp-llm} \\
\bottomrule
\end{tabular}
}
\end{table}

\subsection{Supplementary Simulation Figures and Tables}
\label{sec:exp-appendix-supp}

\subsubsection{Extended Non-Identifiability Sweeps and Action Overlay}
\label{sec:app-nonid}

In Section~\ref{sec:exp-nonid}, Theorem~\ref{thm:non-identifiability} establishes that passive observation over action histories yields zero information gain regarding user bias $z \in \{0, 1\}$. Figure~\ref{fig:action-overlay} and Table~\ref{tab:nonid-appendix} provide extended empirical support for this equivalence across various bandit configurations.

\begin{figure}[htbp]
\centering
\includegraphics[width=0.85\linewidth]{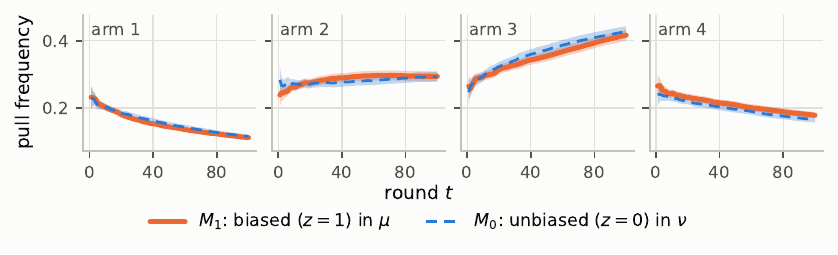}
\caption{\textbf{Action trajectory superposition under Static Binary Bias.} Per-arm selection frequencies for a biased agent $M_1$ under Static Binary Bias ($(u,v)=(0.75,0.15)$ in environment $\boldsymbol\mu=(0.20,0.50,0.60,0.35)$) versus an unbiased agent $M_0$ in confounded environment $\boldsymbol\nu=(0.27,0.45,0.51,0.36)$ ($\mathrm{Beta}(1,1)$ priors, $600$ seeds per model, $95\%$ confidence bands, $T=100$). As predicted by Theorem~\ref{thm:non-identifiability}, action marginals overlap across all time steps.}
\label{fig:action-overlay}
\end{figure}

\paragraph{Action Trajectory Overlap.}
Figure~\ref{fig:action-overlay} visualizes temporal arm selection probabilities for $M_1$ (biased) and $M_0$ (unbiased under confounded rewards $\boldsymbol{\nu}$). Across all four arms, empirical action frequencies match throughout the $T=100$ horizon. The maximum gap in pull frequency at any single round remains within $0.045$ units ($\le 2.44$ standard errors), demonstrating that action distributions are empirically indistinguishable.

\begin{table}[htbp]
\centering
\small
\caption{\textbf{Non-identifiability evaluation across six parameter configurations} at horizon $T=500$ ($20{,}000$ simulation trajectories per configuration). Total Variation ($\mathrm{TV}$) is estimated via $10$-bin summary histogram representations of action traces and compared against the finite-sample noise floor ($\mathrm{TV}_0$). Classification accuracy ($\mathrm{Acc}_{\text{clf}}$) is evaluated using Gradient Boosted Trees ($5$-fold CV). $\mathrm{Acc}_{\text{Bayes}}$ denotes Monte Carlo estimates of the Bayes-optimal decision boundary. Unadjusted permutation test $p$-values are reported for completeness.}
\label{tab:nonid-appendix}
\begin{tabular}{lcc ccc cc}
\toprule
\textbf{Configuration} & $K$ & $(u,v)$ & \textbf{TV} & $\mathbf{TV_0}$ & $p$ & $\mathbf{Acc}_{\text{clf}}$ \textbf{(95\% CI)} & $\mathbf{Acc}_{\text{Bayes}}$ \\
\midrule
base            & 4 & (0.75, 0.15) & 0.0142 & 0.0111 & 0.040 & 0.500\ [0.498, 0.503] & 0.501 \\
sharp-gap       & 3 & (0.70, 0.10) & 0.0087 & 0.0071 & 0.159 & 0.502\ [0.499, 0.505] & 0.499 \\
order-reversing & 3 & (0.30, 0.80) & 0.0073 & 0.0085 & 0.716 & 0.500\ [0.493, 0.506] & 0.500 \\
near-identity   & 2 & (0.95, 0.05) & 0.0089 & 0.0028 & 0.010 & 0.499\ [0.495, 0.502] & 0.501 \\
uninformative   & 3 & (0.50, 0.50) & 0.0161 & 0.0160 & 0.443 & 0.495\ [0.480, 0.509] & 0.500 \\
six-arm         & 6 & (0.60, 0.30) & 0.0106 & 0.0102 & 0.408 & 0.503\ [0.497, 0.508] & 0.501 \\
\bottomrule
\end{tabular}
\end{table}

\paragraph{Statistical Analysis of Trace Invariance.}
Table~\ref{tab:nonid-appendix} summarizes classification and distribution-distance metrics across six environments. To handle the combinatorial state space $K^T$ at $T=500$, total variation (TV) distances are computed over empirical histograms of summary action statistics (per-arm pull ratios and transition counts). Across all regimes:
\begin{enumerate}
    \item Estimated TV distances remain negligible ($\mathrm{TV} \le 0.0161$) and hover close to the finite-sample sampling noise level $\mathrm{TV}_0$.
    \item Out-of-sample accuracy for trained non-linear classifiers ($\mathrm{Acc}_{\text{clf}}$) and Monte Carlo approximations of the Bayes-optimal rule ($\mathrm{Acc}_{\text{Bayes}}$) remain centered at random chance ($0.500 \pm 0.005$).
    \item Unadjusted permutation tests yield two nominal instances below $\alpha=0.05$. However, after Holm--Bonferroni correction for $6$ hypothesis tests, no configuration rejects the null hypothesis of distributional identity at $\alpha=0.05$, consistent with Theorem~\ref{thm:non-identifiability}.
\end{enumerate}

\subsubsection{Individual Belief Trajectories and Sensitivity Analysis}

\begin{figure}[htbp]
\centering
\includegraphics[width=0.7\linewidth]{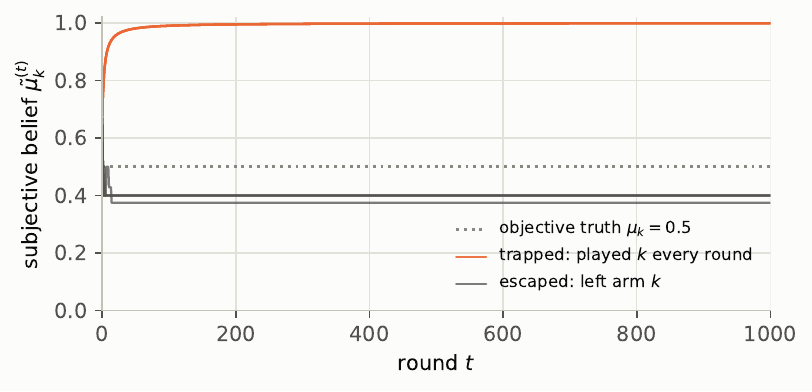}
\caption{\textbf{Individual subjective belief trajectories} $\tilde\mu_k^{(t)}$ for the diagnostic arm ($k=0$, true mean $\mu_k=0.5$, horizontal dotted line) over $T=1{,}000$ rounds. Orange trajectories represent agents locked in the confirmation trap ($1{,}824 / 4{,}000$ runs, of which $1{,}643$ satisfy event $\mathcal{E}_k$ of Theorem~\ref{thm:lock-in}). Grey trajectories depict agents escaping arm $k$, whose subjective beliefs remain constant once play on arm $k$ ceases.}
\label{fig:belief-traj}
\end{figure}

\paragraph{Individual Belief Dynamics.} 
Figure~\ref{fig:belief-traj} illustrates the sample-path behavior underlying Theorem~\ref{thm:lock-in}. Trajectories entering the confirmation trap display subjective expectations that rapidly approach certainty ($\tilde\mu_k^{(t)} \to 1$), despite the true reward mean being $\mu_k=0.5$. This mirrors Step~5 of the proof of Theorem~\ref{thm:lock-in}, where the subjective posterior ratio satisfies $(\alpha_k+t)/(\alpha_k+\beta_k+t) \to 1$ on $\mathcal{E}_k$. In contrast, agents escaping arm $k$ update their subjective expectations downward during initial rounds before abandoning the arm, after which their posterior estimates $\tilde\mu_k^{(t)}$ freeze below the objective mean $\mu_k=0.5$.

\paragraph{Sensitivity to Bias Severity.} 
Figure~\ref{fig:headline} (right) in the main text evaluates performance across $C \in [0.03, 100]$ and $p \in [1.25, 4]$. Within the certified region to the left of the dashed boundary, empirical separation times remain stable: $\approx 27$ rounds under non-coercive execution and $\approx 17$ rounds under coercive blocks, confirming that the horizon $m_\delta$ is robust to distortion strength $(C, p)$ within theoretical bounds. Active bias identification functions empirically across parameter regimes extending up to two orders of magnitude past the analytical boundary. Beyond this empirical threshold, identification degrades. When $C(1-\tilde\mu)^p \ge 1$, the upper bound in Definition~\ref{def:conf_bias} becomes uninformative, leaving only the $\gamma\tilde\mu$ lower bound from Equation~\ref{eq:attenuation}. Under continuous forced failures, the protected prior expectation $\tilde\mu$ diminishes, causing the subjective perception of a severely biased agent to align with that of an unbiased decision-maker and eroding distribution separation.

\begin{figure}[htbp]
\centering
\includegraphics[width=\linewidth]{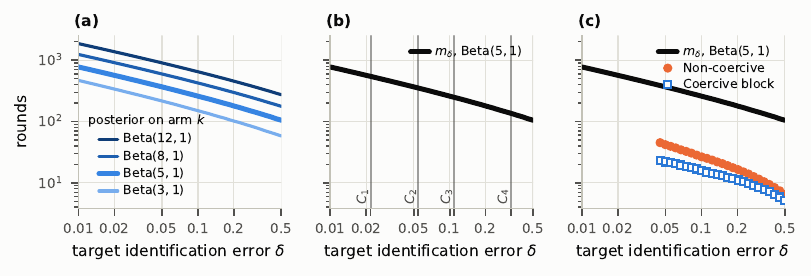}
\caption{\textbf{Analytical sample complexity $m_\delta$ computed via Theorem~\ref{thm:identifiability}.} \textbf{(a)} Required intervention length $m_\delta$ across target error rates $\delta \in [10^{-2}, 0.5]$ for diagnostic arm priors $\mathrm{Beta}(A,1)$ with $A \in \{3,5,8,12\}$. \textbf{(b)} Truncation thresholds $\delta_{\min}(C) = 6\varepsilon_\infty(A,B,C,p)$ induced by distortion parameters $C_1=0.02, C_2=0.05, C_3=0.1, C_4=0.3$ under $p=2$ (vertical lines). \textbf{(c)} Theoretical sample bound $m_\delta$ for $\mathrm{Beta}(5,1)$ compared against actual empirical steps required to achieve target separation under coercive (blue squares) and non-coercive execution (orange circles) ($20{,}000$ seeds, $C=0.005$).}
\label{fig:sample-complexity}
\end{figure}

\paragraph{Sample Complexity and Theoretical Tightness.}
Figure~\ref{fig:sample-complexity} details the behavior of the analytical bound $m_\delta$ established in Theorem~\ref{thm:identifiability}. Panel~(a) shows that stronger initial prior beliefs (higher $A$ in $\mathrm{Beta}(A,1)$) shift the required intervention horizon upward, reflecting the larger number of forced failures needed to overcome higher subjective expectations. Panel~(b) illustrates how increasing the distortion parameter $C$ induces truncation thresholds $\delta_{\min}(C)$ (vertical gray lines $C_1$ through $C_4$), below which arbitrary identification confidence $\delta$ cannot be guaranteed due to persistent asymptotic bias. Finally, Panel~(c) compares the theoretical upper bound $m_\delta$ against empirical separation rounds. Both coercive and non-coercive execution achieve target error rates $\delta$ in substantially fewer rounds than guaranteed by $m_\delta$, confirming that the analytical bound provides a valid and conservative safety envelope for active intervention protocols.

\newpage
\subsection{Grounding in Real Language Models}
\label{sec:exp-llm}

The theoretical guarantees validated assuming an optimal observer estimating marginal action likelihoods $P(a_t=k \mid h_t^a, z)$ via closed-form Beta updating. In practical deployments, interaction monitoring is performed by Large Language Models operating over textual interaction histories, and human decision-makers modeled by Definition~\ref{def:conf_bias} interact directly with these system logs. 

To evaluate whether our theoretical framework transfers to neural architectures, we conduct empirical grounding experiments using \texttt{Qwen3-4B} (executed via vLLM, with reasoning features enabled, a budget of $4{,}096$ tokens, and temperature zero for observer evaluation).

\subsubsection{Active Identification Budget and Slack Sweeps}
\label{sec:app-active-table}

Table~\ref{tab:cost-appendix} provides the complete numerical evaluation of active identification policies across target error thresholds $\delta$ and confirmation slack parameters $C$, extending the summary presented in \S\ref{sec:exp-active}.

\begin{table}[htbp]
\centering
\small
\caption{\textbf{Performance metrics for active identification algorithms across target error tolerances $\delta$ and confirmation slack severity $C$} ($20{,}000$ evaluation runs per configuration). Policies: non-coercive (N-C) vs.\ coercive block (C). $m_\delta$ represents the closed-form analytical bound from Theorem~\ref{thm:identifiability}. Empirical rounds and accuracy record the exact point where population-level Total Variation reaches target $1-\delta$. Forced failures measure the expected number of zero-reward interventions delivered to an unbiased user. Results labeled 'n/r' indicate instances where target separation was not reached within the $600$-round horizon budget.}
\label{tab:cost-appendix}
\begin{tabular}{ccl cccc}
\toprule
\makecell[b]{Confirmation \\ Slack $C$} & \makecell[b]{Target Error $\delta$} & \makecell[b]{Policy} & \makecell[b]{Bound $m_\delta$} & \makecell[b]{Empirical \\ Rounds} & \makecell[b]{Accuracy} & \makecell[b]{Forced Failures \\ (Unbiased)} \\
\midrule
0.01 & 0.5 & N-C & --  & 6.4  & 0.750 & 4.8  \\
0.01 & 0.5 & C   & 105 & 5.2  & 0.750 & 5.2  \\
0.01 & 0.2 & N-C & --  & 16.6 & 0.900 & 8.1  \\
0.01 & 0.2 & C   & 180 & 11.2 & 0.900 & 11.2 \\
0.01 & 0.1 & N-C & --  & 27.0 & 0.950 & 9.6  \\
0.01 & 0.1 & C   & 260 & 16.1 & 0.950 & 16.1 \\
\midrule
3.0  & 0.1 & N-C & --  & 33.9 & 0.950 & 10.2 \\
3.0  & 0.1 & C   & --  & 18.0 & 0.950 & 18.0 \\
\midrule
4.0  & 0.1 & N-C & --  & n/r  & n/r   & n/r   \\
4.0  & 0.1 & C   & --  & 23.7 & 0.950 & 23.7 \\
\bottomrule
\end{tabular}
\end{table}

\paragraph{Detailed Cost and Error Trade-offs.}
As detailed in Table~\ref{tab:cost-appendix}, across all valid confirmation regimes ($C \le 3.0$), the non-coercive identification policy systematically reduces collateral disruption imposed on unbiased users compared to coercive testing blocks. For instance, at a loose tolerance threshold $\delta=0.5$, non-coercive testing requires only $4.8$ forced failures on average, compared to $5.2$ for coercive execution. As the error tolerance sharpens ($\delta=0.1$), this gap widens significantly, saving $6.5$ forced failures per user ($9.6$ vs.\ $16.1$). Furthermore, the table illustrates the theoretical breakdown of non-coercive identification when confirmation slack becomes large ($C=4.0$), where conditional interventions fail to accumulate sufficient TV separation within the execution window ('n/r').

\subsubsection{Does a Real LLM Face the Passive Identification Wall?}

\begin{figure}[htbp]
\centering
\includegraphics[width=0.85\linewidth]{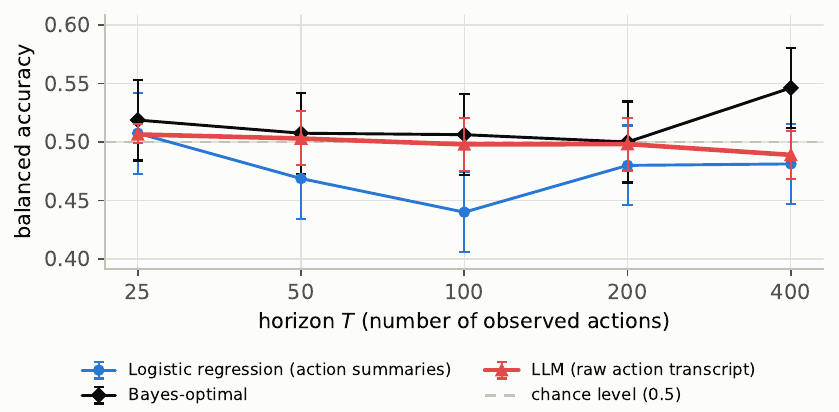}
\caption{\textbf{Passive identification performance on raw action transcripts.} Balanced classification accuracy of \texttt{Qwen3-4B} acting as an action-only observer (red curve) on the baseline environment configuration of Theorem~\ref{thm:non-identifiability}, evaluated alongside logistic regression (blue curve) and the Bayes-optimal decision limit (black curve) across interaction horizons $T \in [25, 400]$ ($400$ trajectories per class, $95\%$ stratified bootstrap CIs). The horizontal dashed line denotes random chance ($0.50$).}
\label{fig:llm-passive}
\end{figure}

\paragraph{Passive Observation Benchmark.} 
We evaluate \texttt{Qwen3-4B} by providing raw action histories $h_T^a$ and prompting the model to classify user type $z \in \{0, 1\}$. Figure~\ref{fig:llm-passive} summarizes balanced accuracy across horizons $T \in [25, 400]$. Across all evaluated horizons, \texttt{Qwen3-4B}'s accuracy remains tightly bounded around random chance ($0.488$ to $0.507$), mirroring the logistic regression baseline and confirming that the theoretical non-identifiability established in Theorem~\ref{thm:non-identifiability} holds strictly for neural observers evaluating unperturbed interaction transcripts.

\subsubsection{Active Identification and Cognitive Bias in LLMs}

\begin{figure}[htbp]
\centering
\includegraphics[width=\linewidth]{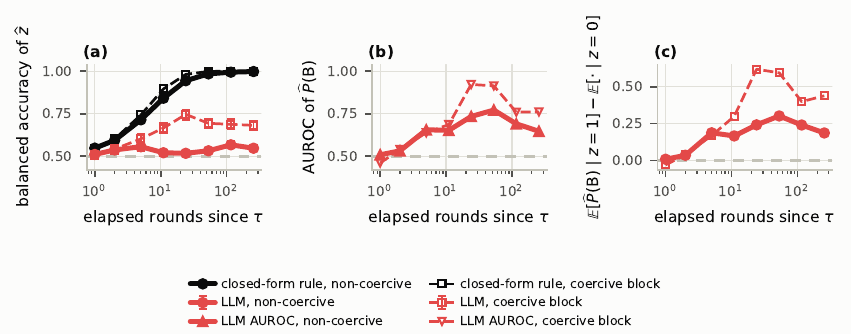}
\caption{\textbf{Evaluation of \texttt{Qwen3-4B} as an active observer receiving action logs and intervention histories} ($9{,}600$ prompts across budgets up to $256$ rounds). \textbf{(a)} Balanced accuracy of binary predictions $\hat z$ vs.\ closed-form reference rules. \textbf{(b)} Area Under the ROC curve (AUROC) derived from model confidence scores. \textbf{(c)} Expected score separation $\mathbb{E}[\widehat P(z{=}1) \mid z{=}1] - \mathbb{E}[\widehat P(z{=}1) \mid z{=}0]$.}
\label{fig:llm-active}
\end{figure}

\paragraph{Active Intervention Monitoring.} 
Figure~\ref{fig:llm-active} evaluates \texttt{Qwen3-4B} in active identification environments where prompts include action choices and intervention logs. As shown in Figure~\ref{fig:llm-active}a, hard binary classifications by the language model achieve accuracy plateaus of $0.55$ under non-coercive monitoring and $0.68$ under coercive blocks at horizon $m=256$. However, Figures~\ref{fig:llm-active}b and~\ref{fig:llm-active}c reveal that raw confidence scores extract substantial discriminative signal: the continuous AUROC peaks at $0.92$ under coercive interventions (at $T=50$) and $0.77$ under non-coercive execution, with score separation reaching $0.60$. This confirms that active interventions supply identifiable signals to neural observers, though full utility requires continuous score calibration rather than rigid output thresholding.

\begin{figure}[htbp]
\centering
\includegraphics[width=\linewidth]{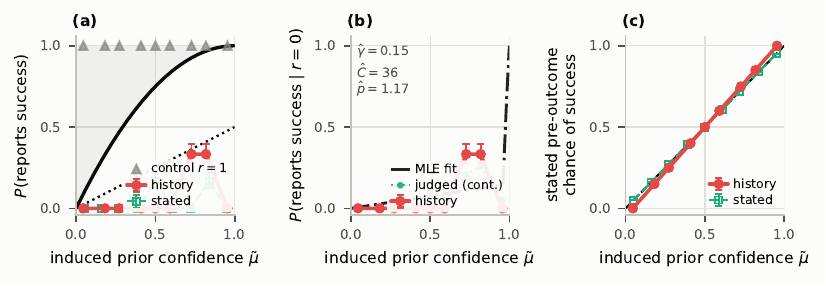}
\caption{\textbf{Elicited cognitive distortion curve for \texttt{Qwen3-4B} acting as a synthetic user} across a $9$-point confidence grid $\tilde\mu \in [0.05, 0.95]$ ($240$ evaluations per point, $95\%$ Wilson CIs). \textbf{(a)} Empirical distortion $\widehat P(\text{reports success} \mid r=0, \tilde\mu)$ under narrative history (red) vs.\ stated verbal induction (green) compared against control $r=1$ (grey triangles). \textbf{(b)} Maximum-likelihood fit of canonical kernel parameters ($\hat\gamma=0.15, \hat C=36, \hat p=1.17$). \textbf{(c)} Manipulation check validating induced prior expectations against reported pre-outcome confidence.}
\label{fig:elicitation}
\end{figure}

\paragraph{Elicitation of Cognitive Distortion.} 
To test whether neural language models exhibit the theoretical confirmation bias formalized in Definition~\ref{def:conf_bias}, we measure the empirical distortion function $\widehat P(\text{reports success} \mid r=0, \tilde\mu)$ using \texttt{Qwen3-4B} as a synthetic user under two prior induction modes: narrative track record (\emph{history}) and direct verbal specification (\emph{stated}).

Figure~\ref{fig:elicitation}c confirms induction fidelity, with pre-outcome reported confidence tracking target priors almost perfectly (Pearson correlation $r > 0.99$). Under true positive outcomes ($r=1$), reporting integrity remains at $100\%$ (grey control triangles in Figure~\ref{fig:elicitation}a). Under objective failure outcomes ($r=0$), empirical success reporting under narrative history induction increases monotonically with prior confidence, peaking at $35\%$ for $\tilde\mu=0.83$ ($z=12.5, p=4.5\times10^{-36}$, Cochran--Armitage trend test). Figure~\ref{fig:elicitation}b presents the maximum-likelihood fit of our canonical family ($\hat\gamma=0.15, \hat C=36, \hat p=1.17$), demonstrating that high prior expectations induce systematic self-confirming reporting distortions in LLM decision agents in close agreement with Definition~\ref{def:conf_bias}.
\section{Population Information and Future Directions}\label{sec:app-population}

The deeper open problem lies in populations. With many users sharing an unknown environment, batch observation may recover the environment while passive data may classify most users, reserving intervention for the ambiguous few.
Indeed, with an unrestricted distribution of environments, a biased population can still remain observationally equivalent to an unbiased population under a transformed environment distribution. 
Even with those constraints, an adaptive policy that optimizes information gained per unit of user burden is a natural bridge from our worst-case construction to deployable elicitation.


\end{document}